\documentclass[11pt,a4paper]{article}

\usepackage[margin=2.5cm]{geometry}
\usepackage{amsmath,amssymb,amsthm,mathtools}
\usepackage{bm}
\usepackage{graphicx}
\usepackage{subfig}
\usepackage{float}
\usepackage{booktabs}
\usepackage{multirow}
\usepackage{caption}
\usepackage[colorlinks=true,linkcolor=blue,citecolor=blue,urlcolor=blue]{hyperref}
\usepackage{authblk}
\usepackage[backend=biber,style=numeric,sorting=none,doi=false,isbn=false,url=false,eprint=true,maxbibnames=99]{biblatex}
\DeclareFieldFormat[article,inproceedings,misc]{title}{\mkbibquote{#1}}

\DeclareFieldFormat{eprint:arxiv}{\href{https://arxiv.org/abs/#1}{\texttt{arXiv:#1}}}
\renewbibmacro*{eprint}{%
  \iffieldundef{eprint}{}{\printfield[eprint:arxiv]{eprint}}%
}

\DeclareBibliographyDriver{article}{%
  \usebibmacro{bibindex}%
  \usebibmacro{begentry}%
  \usebibmacro{author/translator+others}%
  \setunit{\printdelim{nametitledelim}}\newblock
  \usebibmacro{title}%
  \newunit\newblock
  \iffieldundef{doi}{%
    \usebibmacro{journal+issuetitle}%
    \newunit
    \usebibmacro{note+pages}%
    \newunit\newblock
    \iftoggle{bbx:eprint}{\usebibmacro{eprint}}{}%
  }{%
    \href{https://doi.org/\thefield{doi}}{%
      \mkbibemph{\thefield{journaltitle}}%
      \iffieldundef{volume}{}{\space\textbf{\thefield{volume}}}%
      \iffieldundef{pages}{}{,\space\thefield{pages}}%
      \space(\thefield{year})}%
  }%
  \newunit\newblock
  \usebibmacro{addendum+pubstate}%
  \setunit{\bibpagerefpunct}\newblock
  \usebibmacro{pageref}%
  \newunit\newblock
  \usebibmacro{finentry}%
}

\DeclareBibliographyDriver{inproceedings}{%
  \usebibmacro{bibindex}%
  \usebibmacro{begentry}%
  \usebibmacro{author/translator+others}%
  \setunit{\printdelim{nametitledelim}}\newblock
  \usebibmacro{title}%
  \newunit\newblock
  In%
  \iffieldundef{doi}{%
    \space\mkbibemph{\thefield{booktitle}}%
    \iffieldundef{pages}{}{.\space\thefield{pages}}%
    \space(\thefield{year})%
  }{%
    \space\href{https://doi.org/\thefield{doi}}{%
      \mkbibemph{\thefield{booktitle}}%
      \iffieldundef{pages}{}{.\space\thefield{pages}}%
      \space(\thefield{year})}%
  }%
  \newunit\newblock
  \iftoggle{bbx:eprint}{\usebibmacro{eprint}}{}%
  \newunit\newblock
  \usebibmacro{addendum+pubstate}%
  \setunit{\bibpagerefpunct}\newblock
  \usebibmacro{pageref}%
  \newunit\newblock
  \usebibmacro{finentry}%
}

\newtheorem{theorem}{Theorem}[section]
\newtheorem{lemma}[theorem]{Lemma}

\newtheorem{definition}{Definition}[section]

\newcommand{\bF}{\mathbb{F}}

\newcommand{\bE}{\mathbb{E}}
\newcommand{\vw}{\mathbf{w}}
\newcommand{\vv}{\mathbf{v}}

\newcommand{\vx}{\mathbf{x}}
\newcommand{\vb}{\mathbf{b}}

\newcommand{\ebar}{\bar{\varepsilon}}
\newcommand{\etabar}{\bar{\eta}}

\title{Hidden Quantum Advantage near the Decoding Threshold\\of Decoded Quantum Interferometry}

\author[1,2,3]{Maoxin Gao}
\author[1,2,3]{Yan Chang\thanks{Corresponding author: \texttt{cyttkl@cuit.edu.cn}}}
\affil[1]{School of Cybersecurity (Xin Gu Industrial College), Chengdu University of Information Technology, Chengdu 610225, China}
\affil[2]{Advanced Cryptography and System Security Key Laboratory of Sichuan Province, Chengdu 610225, China}
\affil[3]{SUGON Industrial Control and Security Center, Chengdu 610225, China}
\date{}

\begin{document}
\maketitle

\begin{abstract}

Where is the true boundary of the quantum advantage region of decoded quantum interferometry (DQI)?
The best existing answer is provided by Theorem~7.1 in the Supplementary Material of Jordan et al.~\cite{jordan2025}, yet we show that this answer systematically underestimates the extent of quantum advantage.
On the standard partial-win LDPC benchmark instance, there exist 26 consecutive parameter points ($\ell \in [642, 667]$) at which Jordan's analysis declares no quantum advantage ($\langle s\rangle/m < 0.5$), while quantum advantage is in fact present with an approximation ratio reaching $0.66$.
The root cause is that Jordan's bound penalizes the entire system with the worst-case Hamming-layer decoding failure rate $\varepsilon = \max_k \varepsilon_k$, discarding the spectral structure of the DQI tridiagonal matrix.
Exploiting the concentration of the Perron eigenvector, we replace the uniform penalty with the weighted average $\ebar = \sum_k \varepsilon_k w_k^2$ and establish a unified lower bound (Master Theorem) valid over arbitrary finite fields $\bF_q$, proving that it strictly improves upon the relaxed form of Jordan's bound by replacing the operator-norm penalty $2\varepsilon(q-1)(m+1)$ with a tighter Rayleigh-quotient penalty $2\bar\varepsilon\lambda_{\max}$.

\medskip
\noindent \textbf{Keywords:} quantum advantage, decoded quantum interferometry, imperfect decoding, eigenvector weighting, LDPC codes
\end{abstract}

\section{Introduction}\label{sec:intro}

Decoded Quantum Interferometry (DQI) is a quantum optimization framework introduced by Jordan et al.~\cite{jordan2025}.
Drawing on ideas from Regev's reduction~\cite{regev2004}, DQI constructs polynomial interference states of degree $\ell$ to achieve quantum advantage over classical random assignment for constraint satisfaction problems such as MAX-XORSAT~\cite{yamakawa2024}.
Under perfect decoding, the approximation ratio of DQI follows a semicircle law that improves steadily with $\ell/m$~\cite{jordan2025}.
When the polynomial degree exceeds the threshold $(d^\perp - 1)/2$ set by the dual distance of the code, however, the classical decoder fails on high Hamming-weight layers~\cite{gallager1962,richardson2001}, and the resulting decoding failure rates $\varepsilon_k > 0$ degrade the fidelity of the quantum state construction.
Jordan et al.'s Theorem~7.1 (stated for $\mathbb{F}_2$ in the Supplementary Material of~\cite{jordan2025}) gives a two-step chain of lower bounds:
\begin{equation}\label{eq:jordan_bound}
\bE_\vv\langle f\rangle
 \;\ge\; \frac{w^\dagger[A^{(m,\ell,0)}-2\varepsilon(m+1)I]\,w}{\sum_{k=0}^{\ell} w_k^2(1-\varepsilon_k)}
 \;\ge\; \lambda_{\max}-2\varepsilon(m+1),
\end{equation}
where $w$ is the principal eigenvector of the DQI tridiagonal matrix $A^{(m,\ell,0)}$, $\lambda_{\max}$ is its largest eigenvalue, and $\varepsilon=\max_{0\le k\le\ell}\varepsilon_k$ (Lemma~7.7 of~\cite{jordan2025}). The second inequality relaxes the denominator $\sum_k w_k^2(1-\varepsilon_k)=1-\bar\varepsilon$ to $w^\dagger w=1$. Jordan et al.\ employ this relaxed form (the right-hand side) in their asymptotic statement.
This bound is clean and broadly applicable, but it suffers from an \emph{analytical blind spot} near the decoding threshold: as $\varepsilon$ grows rapidly with $\ell$, the penalty term quickly exceeds $\lambda_{\max}$, rendering the bound vacuous.
In this region, quantum advantage may well exist yet remain undetectable by the original analysis.

Since its introduction, DQI has stimulated a substantial body of follow-up work~\cite{chailloux2024,gu2025,schmidhuber2025,bu2025,kramer2026,anschuetz2025,parekh2025,marwaha2025}, investigating its capabilities and limitations from multiple angles.
On the methodological side, Chailloux and Tillich~\cite{chailloux2024} broadened the quantum advantage parameter range via soft decoders for Reed--Solomon codes, Gu and Jordan~\cite{gu2025} extended DQI to algebraic geometry codes (HOPI), Schmidhuber et al.~\cite{schmidhuber2025} developed Hamiltonian DQI for general Pauli Hamiltonians, and Bu et al.~\cite{bu2025} analyzed the effects of physical noise on DQI.
On the hardness side, Kramer et al.~\cite{kramer2026} proved tight inapproximability for MAX-LINSAT based on H\r{a}stad's theorem~\cite{hastad2001}, establishing the worst-case $r/q$ barrier; Anschuetz et al.~\cite{anschuetz2025} showed via the overlap gap property~\cite{gamarnik2021} that DQI is obstructed on unstructured random instances; Parekh~\cite{parekh2025} proved that DQI offers no quantum advantage for MaxCut; and Marwaha et al.~\cite{marwaha2025} placed DQI within low levels of the polynomial hierarchy.
These works focus on the scope and complexity-theoretic status of DQI, whereas the present paper addresses an orthogonal dimension: improving the performance analysis itself under imperfect decoding, thereby revealing quantum advantage regions overlooked by the original bound.

This blind spot is far from inconsequential.
The decoding threshold is precisely the critical band where DQI transitions from winning to failing, and understanding the exact location of this transition bears directly on practical assessments of DQI.
We show that this blind spot indeed conceals quantum advantage that the original analysis fails to capture, and the evidence comes from Jordan et al.'s own experimental data.

On the partial-win LDPC instance of Jordan et al.\ ($m = 5000$, BP decoding), we apply both Jordan's Theorem~7.1 and our Master Theorem (Theorem~\ref{thm:master}) at each value of $\ell$, revealing 26 consecutive data points in the range $\ell \in [642, 667]$ where Jordan's bound yields $\langle s\rangle/m < 0.5$ (vacuous), while our bound gives $\langle s\rangle/m \in [0.57, 0.66]$, certifying significant quantum advantage (Figure~\ref{fig:teaser}).
The most striking case occurs at $\ell = 642$: Jordan's bound gives $0.417$, whereas ours gives $0.660$, a compression of the effective error rate by 91.3\%.
These are not hypothetical constructions but publicly available benchmark data---the quantum advantage has been there all along; the original analytical tools simply could not see it.

The root cause lies in the choice $\varepsilon = \max_k \varepsilon_k$.
In bounding the quadratic form $\vw^T E \vw$ by the operator norm $\|E\|$, Jordan's proof discards the structural information of the leading eigenvector $\vw$.
Yet the leading eigenvector of the DQI matrix is far from arbitrary: it satisfies $w_k > 0$ (Lemma~\ref{lem:positivity}), and $w_k^2$ exhibits a sharply unimodal distribution centered at $k^* \approx \ell(1 - \ell/m)$ with effective width $O(\sqrt{m})$ (Lemma~\ref{lem:concentration}).
This means that although the failure rate $\varepsilon_k$ at high-weight layers can be large, the corresponding weight $w_k^2$ decays exponentially, so that its actual impact on DQI performance is far smaller than $\varepsilon$ suggests.
Replacing the quadratic form with the operator norm amounts to assuming that $\vw$ aligns with the worst-case direction of $E$, but the tridiagonal structure ensures that $\vw$ naturally avoids the high-error region---ignoring this structure is the source of the blind spot.

We exploit this spectral structure by defining the weighted failure rate $\ebar := \sum_k \varepsilon_k w_k^2$ and establishing the unified lower bound
\begin{equation}\label{eq:master_intro}
\bE_\vv \langle f \rangle \ge \frac{\lambda_{\max}(1 - 2\ebar) + 2d\etabar}{1 - \ebar},
\end{equation}
where $\etabar := \sum_k k \varepsilon_k w_k^2$ (Theorem~\ref{thm:master}).
This bound holds for arbitrary finite fields $\bF_q$ and arbitrary diagonal offsets $d$, and strictly improves upon the relaxed form of Jordan's bound (the right-hand side of~\eqref{eq:jordan_bound}) via a single structural replacement (Theorem~\ref{thm:comparison}): the operator-norm penalty $2\varepsilon(q-1)(m+1)$ is replaced by the Rayleigh-quotient penalty $2\bar\varepsilon\lambda_{\max}/(1-\bar\varepsilon)$. This replacement simultaneously contracts the max-failure rate to the weighted average ($\varepsilon\to\bar\varepsilon$) and rescales the multiplier from $m+1$ to $\lambda_{\max}/\varepsilon$.
On the standard LDPC benchmark instance, it reveals 26 quantum advantage points missed by the original analysis.
The conclusion is that the quantum advantage region of DQI has been underestimated, and eigenvector weighting is the only analytical tool capable of detecting the hidden quantum advantage near the decoding threshold.

The remainder of this paper is organized as follows.
Section~\ref{sec:setup} introduces the basic setup of the DQI framework.
Section~\ref{sec:results} states the main results.
Section~\ref{sec:evidence} presents concrete evidence from Jordan's experimental data.
Section~\ref{sec:discussion} discusses implications and outlook.
Complete proofs, eigenvector property analysis, seven groups of numerical experiments, and auxiliary data are provided in the supplementary material.

\begin{figure}[htbp]
\centering
\includegraphics[width=\textwidth]{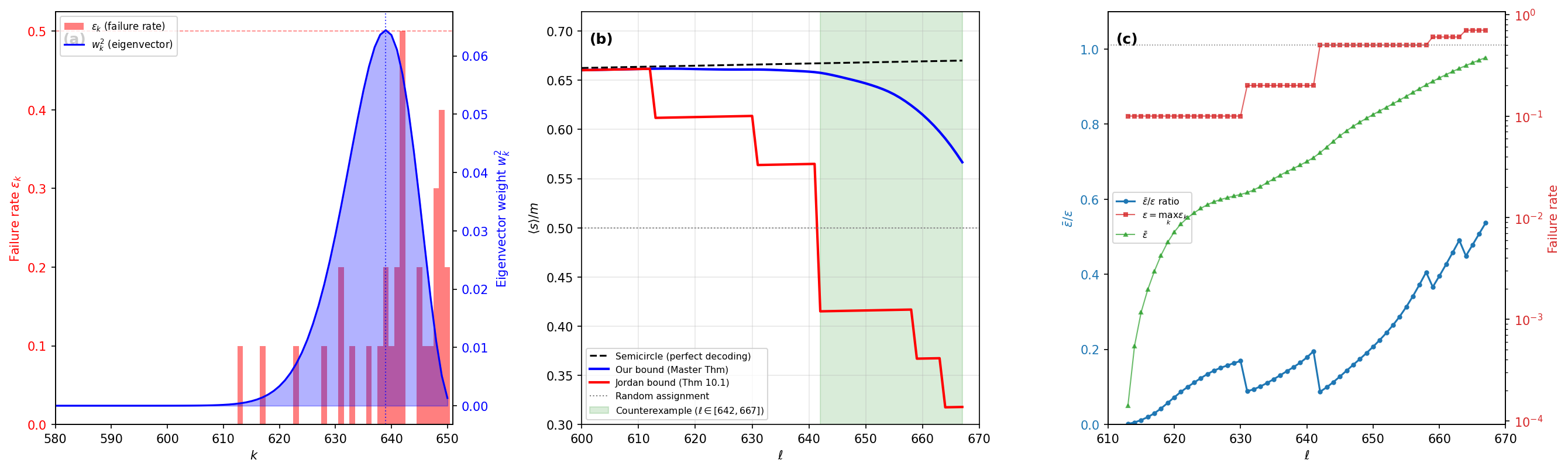}
\caption{Analytical blind spot on the partial-win LDPC instance ($m = 5000$), consisting of three panels.
(a)~At fixed $\ell = 650$: decoding failure rate $\varepsilon_k$ per Hamming layer (red bars, left axis) overlaid with the leading eigenvector weight $w_k^2$ (blue curve with light blue shading, right axis); the peak of $w_k^2$ lies in the region where $\varepsilon_k$ is still small.
(b)~Approximation ratio $\langle s\rangle/m$ as a function of $\ell$: black dashed line is the semicircle law (perfect decoding upper bound), blue solid line is the Master Theorem bound, red dash-dotted line is Jordan's bound (Theorem~7.1), gray dashed line is the random assignment baseline $0.5$; the green shaded region ($\ell \in [642, 667]$) marks 26 consecutive parameter points where Jordan's bound drops below $0.5$ (vacuous) while our bound still certifies quantum advantage ($> 0.5$).
(c)~Ratio $\ebar/\varepsilon$ (blue solid line, left axis) and logarithmic values of $\varepsilon$ (red) and $\ebar$ (green) (right axis) as functions of $\ell$, showing that the weighted failure rate is consistently much smaller than the worst-layer failure rate.}
\label{fig:teaser}
\end{figure}

\section{Setup}\label{sec:setup}

Given a parity-check matrix $B \in \bF_q^{m \times n}$ over a finite field $\bF_q$ and a target vector $\vv \in \bF_q^m$, the MAX-LINSAT problem asks for $\vx \in \bF_q^n$ maximizing the number of satisfied constraints $s(\vx) = \#\{i : \vb_i \cdot \vx = v_i\}$.
The objective function $f$ is related to $s$ by $s = m/q + (q-1)f/(2q)$, and random assignment achieves an expected satisfaction ratio of $1/q$.
The DQI algorithm constructs polynomial interference states of degree $\ell$, whose performance is governed by the largest eigenvalue of the following matrix.

\begin{definition}[DQI matrix]\label{def:dqi}
$A_q^{(m,\ell,d)}$ is a real symmetric tridiagonal matrix of size $(\ell+1) \times (\ell+1)$ with diagonal entries $\alpha_k = kd$ and off-diagonal entries $a_{k+1} = \sqrt{(q-1)(k+1)(m-k)}$, where $m$ is the number of constraints, $\ell \le m$ is the polynomial degree, and $d$ is the offset parameter.
\end{definition}

When $2\ell + 1 \ge d^\perp$, DQI requires a classical decoder.
The decoding failure rate at Hamming layer $k$ is $\varepsilon_k$, and the expected post-decoding matrix decomposes as~\cite{jordan2025}
\begin{equation}\label{eq:error_decomp}
\bE_\vv \bar{A}^{(m,\ell,\mathcal{D})} = A^{(m,\ell,d)} - E,
\end{equation}
where $E$ is a symmetric matrix with zero diagonal and off-diagonal entries satisfying $E_{k,k+1} \le (\varepsilon_k + \varepsilon_{k+1}) a_{k+1}$.
Jordan et al.\ then obtained the relaxed form $\bE_\vv \langle f \rangle \ge \lambda_{\max} - 2\varepsilon(q-1)(m+1)$ by taking $\varepsilon = \max_k \varepsilon_k$ (via Gershgorin's bound on $\|E\|$) and relaxing the denominator $\sum_k w_k^2(1-\varepsilon_k)$ to $1$ (see~\eqref{eq:jordan_bound}). Jordan et al.\ proved Theorem~7.1 for $\mathbb{F}_2$. The same argument produces the $(q-1)$ factor above for general $\mathbb{F}_q$, and we adopt this convention throughout the paper.

\section{Main results}\label{sec:results}

Let $\vw$ be the normalized leading eigenvector of $A_q^{(m,\ell,d)}$ and $\lambda_{\max}$ the corresponding eigenvalue.
Define the weighted failure rate $\ebar := \sum_{k=0}^{\ell} \varepsilon_k w_k^2$ and the weighted failure moment $\etabar := \sum_{k=0}^{\ell} k \varepsilon_k w_k^2$.

\begin{theorem}[Master Theorem]\label{thm:master}
Suppose $w_k > 0$ for all $k$ (guaranteed by Lemma~\ref{lem:positivity}).
For uniformly random $\vv \in \bF_q^m$,
\begin{equation}\label{eq:master}
\bE_\vv \langle f \rangle \ge \frac{\lambda_{\max}(1 - 2\ebar) + 2d\etabar}{1 - \ebar}.
\end{equation}
\end{theorem}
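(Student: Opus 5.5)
The plan is to start from the unrelaxed first inequality in Jordan's chain~\eqref{eq:jordan_bound}, in the general form given by the decomposition~\eqref{eq:error_decomp}. Evaluated with the normalized leading eigenvector $\vw$ as trial vector, it reads
\[
\bE_\vv\langle f\rangle \;\ge\; \frac{\vw^T\bigl(A_q^{(m,\ell,d)}-E\bigr)\vw}{\sum_{k=0}^{\ell} w_k^2(1-\varepsilon_k)} \;=\; \frac{\lambda_{\max}-\vw^T E\vw}{1-\ebar}.
\]
The identity uses $\vw^T A\vw=\lambda_{\max}$ and $\sum_k w_k^2=1$. I take this first inequality as given for general $\bF_q$ and general offset $d$, following the convention adopted in Section~\ref{sec:setup}; I would check that Jordan's derivation never uses $q=2$ or $d=0$ except through the entries $a_{k+1}$ and $\alpha_k$. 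I also assume $\ebar<1$, so the denominator is positive. The whole proof then reduces to the single estimate
\[
\vw^T E\vw \le 2\lambda_{\max}\ebar - 2d\etabar,
\]
because substituting it gives exactly~\eqref{eq:master}.

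To bound the quadratic form, I use that $E$ is symmetric with zero diagonal, so only the off-diagonal terms contribute:
\[
\vw^T E\vw = 2\sum_{k=0}^{\ell-1} E_{k,k+1}\,w_kw_{k+1}.
\]
Here the hypothesis $w_k>0$ is essential. Every product $w_kw_{k+1}$ is positive, so the entrywise bound $E_{k,k+1}\le(\varepsilon_k+\varepsilon_{k+1})a_{k+1}$ transfers term by term with no sign issues:
\[
\vw^T E\vw \le 2\sum_{k=0}^{\ell-1}(\varepsilon_k+\varepsilon_{k+1})\,a_{k+1}w_kw_{k+1}.
\]
This is precisely where Jordan's argument instead passes to the operator norm $\|E\|$ and loses the structure of $\vw$.

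The key step is to regroup this sum by the index carrying $\varepsilon$. The coefficient of $\varepsilon_k$ is $w_k\,(a_kw_{k-1}+a_{k+1}w_{k+1})$, with the conventions $a_0w_{-1}=0$ and $a_{\ell+1}w_{\ell+1}=0$ at the boundary. The $k$-th row of the eigenvalue equation $A\vw=\lambda_{\max}\vw$ for the tridiagonal matrix of Definition~\ref{def:dqi} states
\[
a_kw_{k-1}+kd\,w_k+a_{k+1}w_{k+1}=\lambda_{\max}w_k .
\]
Hence the coefficient equals $(\lambda_{\max}-kd)\,w_k^2$. Summing over $k$ gives
\[
\vw^T E\vw \le 2\sum_{k=0}^{\ell}\varepsilon_k(\lambda_{\max}-kd)\,w_k^2 = 2\lambda_{\max}\ebar-2d\etabar,
\]
which is the required estimate.

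The main obstacle is bookkeeping rather than a conceptual difficulty. First, the boundary rows $k=0$ and $k=\ell$ of the eigen-equation must match the regrouped sum; they do, because $\vw$ is the eigenvector of the truncated $(\ell+1)\times(\ell+1)$ matrix. Second, one must make sure the first inequality of Jordan's chain is really available with the $(q-1)$-scaled off-diagonal entries and the diagonal offset $d$. Finally, I would remark that the argument needs only $\vw^T E\vw\le\sum_{k,j}|E_{kj}|w_kw_j$ together with positivity of $\vw$, so signed entries of $E$ cause no trouble.
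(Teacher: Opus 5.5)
Your proposal is correct and follows the paper's proof essentially step for step. Both arguments take the Rayleigh quotient with $\mathbf{w}$ (denominator $1-\bar\varepsilon$), bound $E_{k,k+1}$ entrywise using $w_kw_{k+1}>0$, and use the eigenvector equation to turn the tridiagonal error sum into $\lambda_{\max}\bar\varepsilon - d\bar\eta$. Collecting the coefficient of each $\varepsilon_k$ directly is a tidier bookkeeping of the paper's split into $\Sigma_1+\Sigma_2$, which relies on cancellation plus the boundary row $k=\ell$. Your explicit assumption $\bar\varepsilon<1$, needed to preserve the inequality direction, is a point the paper leaves implicit.
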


The key idea of the proof is as follows (see Supplementary Material~\ref{app:proof} for the full derivation).
We choose $\vw$ as the test vector for the Rayleigh quotient; the denominator evaluates to $1 - \ebar$ (Jordan relaxes this to~$1$).
The error quadratic form $\vw^T E \vw$ in the numerator requires an upper bound: Jordan controls it via the operator norm $\|E\|$, whereas we expand it directly and exploit the eigenvector equation $a_k w_{k-1} + kd \cdot w_k + a_{k+1} w_{k+1} = \lambda_{\max} w_k$, which splits the error sum into $\Sigma_1 + \Sigma_2$ and yields exact cancellations in the overlapping range, ultimately giving $\vw^T E \vw \le 2(\lambda_{\max} \ebar - d\etabar)$.
This cancellation is an algebraic consequence of the tridiagonal structure, and it makes the weighted quantity $\ebar$ emerge naturally.

For $q = 2$, $d = 0$, the bound simplifies to $\bE_\vv \langle f \rangle \ge \lambda_{\max}(1 - 2\ebar)/(1 - \ebar)$.
In the asymptotic limit, $\lambda_{\max} \sim m[2\sqrt{(q-1)\mu(1-\mu)} + \mu d]$ (Lemma~\ref{lem:lambda_asym}), and the approximation ratio becomes
\begin{equation}\label{eq:asymptotic}
\frac{\langle s \rangle}{m} \ge \frac{1}{q} + \frac{q-1}{q} \sqrt{(q-1)\mu(1-\mu)} \cdot \frac{1-2\ebar}{1-\ebar},
\end{equation}
which reduces to the semicircle law when $\ebar = 0$.

\begin{theorem}[Strict improvement]\label{thm:comparison}
For any decoding scheme with $\varepsilon \in (0, 1/2)$, the bound~\eqref{eq:master} is at least as tight as Jordan's bound, and strictly tighter unless $\varepsilon_k$ is identical for all $k$ and $\mu = 1/2$.
Relative to Jordan's relaxed form (the right-hand side of~\eqref{eq:jordan_bound}), the improvement decomposes into two components: (i) the structural replacement of the operator-norm penalty $2\varepsilon(m+1)$ by the Rayleigh-quotient penalty $2\bar\varepsilon\lambda_{\max}$, which simultaneously replaces $\varepsilon$ by $\bar\varepsilon\le\varepsilon$ and rescales $m+1$ to $\lambda_{\max}$; and (ii) retaining the exact denominator $1-\bar\varepsilon$ that Jordan's relaxed form bounds below by $1$. Relative to Jordan's tight form (the middle expression), only (i) remains, since (ii) is common to both. Quantitative estimates are given in Supplementary Material~\ref{app:comparison}.
\end{theorem}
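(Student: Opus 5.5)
The plan is to write both bounds as ``$\lambda_{\max}$ minus a penalty'' and compare the penalties term by term. Put $P_{\mathrm M}:=\lambda_{\max}-\text{RHS of }\eqref{eq:master}$. A short algebraic rearrangement gives
\begin{equation*}
\frac{\lambda_{\max}(1-2\ebar)+2d\etabar}{1-\ebar}
=\lambda_{\max}-\frac{\lambda_{\max}\ebar-2d\etabar}{1-\ebar},
\qquad\text{so}\qquad
P_{\mathrm M}=\frac{\lambda_{\max}\ebar-2d\etabar}{1-\ebar}.
\end{equation*}
Jordan's relaxed penalty is $P_{\mathrm J}=2\varepsilon(q-1)(m+1)$. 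It therefore suffices to show $P_{\mathrm M}\le P_{\mathrm J}$ and to characterize equality. The same comparison applied to the numerators handles the tight form: from the proof of Theorem~\ref{thm:master} we use $\vw^TE\vw\le 2(\lambda_{\max}\ebar-d\etabar)$, whereas Jordan uses $\vw^TE\vw\le\|E\|\le 2\varepsilon(q-1)(m+1)$. Both sides share the exact denominator $1-\ebar$, which is component (ii) in the statement.

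The key steps, in order, are as follows.

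\textbf{Step 1.} Since $w_k^2\ge 0$ and $\sum_k w_k^2=1$, we have $\ebar\le\varepsilon$, with equality only if $\varepsilon_k=\varepsilon$ for every $k$. This uses $w_k>0$ from Lemma~\ref{lem:positivity}. Because $\varepsilon<1/2$, we also get $1/(1-\ebar)<2$.

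\textbf{Step 2.} Bound $\lambda_{\max}$ by Gershgorin on $A_q^{(m,\ell,d)}$. For $d=0$, AM--GM gives $k(m-k+1)\le\bigl((m+1)/2\bigr)^2$, hence $a_k\le\sqrt{q-1}\,(m+1)/2$. Summing the two off-diagonal entries in a row then gives $\lambda_{\max}\le\sqrt{q-1}\,(m+1)\le(q-1)(m+1)$.

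\textbf{Step 3.} Combining the two steps for $d=0$:
\begin{equation*}
P_{\mathrm M}=\frac{\lambda_{\max}\ebar}{1-\ebar}\le 2\ebar\lambda_{\max}\le 2\varepsilon\lambda_{\max}\le 2\varepsilon(q-1)(m+1)=P_{\mathrm J}.
\end{equation*}
The middle link, $\varepsilon\to\ebar$ together with $(q-1)(m+1)\to\lambda_{\max}$, is exactly the structural replacement (i). For the tight form, the same chain without the factor $1/(1-\ebar)$ gives $2\lambda_{\max}\ebar\le 2\varepsilon(q-1)(m+1)$.

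\textbf{Step 4 (general $d$).} For $d\ne 0$ I would treat the offset term separately. The diagonal shifts $\lambda_{\max}$ by at most $|d|\ell$ in Gershgorin. Meanwhile the term $-2d\etabar$ in $P_{\mathrm M}$ absorbs a contribution of order $d\sum_k k\varepsilon_k w_k^2$. Pairing these should give
\begin{equation*}
\lambda_{\max}\ebar-d\etabar\le \ebar\cdot\max_k\bigl(a_k+a_{k+1}\bigr)+d\sum_k(\ell-k)\varepsilon_k w_k^2 .
\end{equation*}
I expect this step to be the main obstacle. It works cleanly when $d\ge 0$ is not large relative to $\sqrt{q-1}\,m$. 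If it does not, I would state the comparison for $d=0$, which is Jordan's setting, and add a remark covering $d>0$.

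\textbf{Step 5 (strictness).} The chain in Step 3 is an equality only if every link is tight. First, $\ebar=\varepsilon$, which forces all $\varepsilon_k$ to be identical. Second, equality must hold in Gershgorin and AM--GM, i.e.\ $\lambda_{\max}$ must match the peak off-diagonal value $\sqrt{q-1}\,(m+1)$. Since the peak $a_k$ occurs at $k=(m+1)/2$, this in turn forces $\mu=\ell/m=1/2$. For the asymptotic form of this equality condition I would invoke Lemma~\ref{lem:lambda_asym}, which gives $\lambda_{\max}\sim 2m\sqrt{(q-1)\mu(1-\mu)}$ when $d=0$; this attains the maximum $m\sqrt{q-1}$ only at $\mu=1/2$. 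The factor $1/(1-\ebar)<2$ in the relaxed comparison is always strict, so that link never yields equality.

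The delicate point in Step 5 is the finite-$m$ equality case. There I would be content to show that the inequality is strict whenever $\mu\ne 1/2$ or the $\varepsilon_k$ are not all equal.
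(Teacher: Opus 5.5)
For $d=0$ your argument is correct and is essentially the paper's. Supplementary Material~\ref{app:comparison} only records the asymptotic gap $\Delta$. Its positivity rests on the same three facts you use: $\ebar\le\varepsilon$, $1/(1-\ebar)<2$, and $\lambda_{\max}\lesssim\sqrt{q-1}\,m$, which the paper gets from Lemma~\ref{lem:lambda_asym}. Your Gershgorin/AM--GM bound makes the comparison hold at finite $m$, not just asymptotically.

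The one step that fails as written is the equality analysis in Step~5. From ``the AM--GM peak is at $k=(m+1)/2$'' you cannot conclude $\mu=1/2$. That location only requires $\ell\ge(m+1)/2$, so your argument does not yet give strictness when the $\varepsilon_k$ are identical and $\mu>1/2$. The fix is already in your Step~2. The equation $k(m-k+1)=((m+1)/2)^2$ holds for at most one integer $k$, so no row has both $a_k$ and $a_{k+1}$ at the peak. Hence $\lambda_{\max}\le\max_k(a_k+a_{k+1})<\sqrt{q-1}\,(m+1)$ strictly. In fact $\lambda_{\max}\le\sqrt{q-1}\,m$, by interlacing with the $\ell=m$ matrix, whose spectrum is $\sqrt{q-1}\{m,m-2,\dots,-m\}$. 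Consequently both the relaxed and the tight comparisons are strict whenever $\varepsilon>0$. The exceptional case in the statement never produces equality, and the ``unless'' clause holds trivially.

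For Step~4, do not apply Gershgorin to the shifted matrix. Use the pre-substitution form from the proof of Theorem~\ref{thm:master}:
$\lambda_{\max}\ebar-d\etabar=\sum_k(\varepsilon_k+\varepsilon_{k+1})w_ka_{k+1}w_{k+1}\le\varepsilon\,\vw^TA^{(m,\ell,0)}\vw\le\varepsilon\sqrt{q-1}\,m$.
This settles the tight-form comparison for every $d$. It also settles the relaxed comparison for $d\ge0$, since then $\lambda_{\max}\ebar-2d\etabar\le\lambda_{\max}\ebar-d\etabar$.

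For $d<0$ the relaxed comparison can genuinely fail in the vacuous regime. Take $q=2$, $\ell=m\ge14$, $d=-0.8$, $\varepsilon_k\equiv0.45$. Here $A^{(m,m,d)}=2J_x+d(J_z+m/2)$ for spin $m/2$ is exactly solvable, and the master bound is about $-0.288m$ against Jordan's $-0.223m-0.9$. So your fallback to $d=0$ is the right scope, and it is also the only case the paper's $\Delta$ formula covers.
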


The proof and quantitative analysis are given in Supplementary Material~\ref{app:comparison}.

\section{Evidence from Jordan's data}\label{sec:evidence}

We use the partial-win LDPC benchmark instance from~\cite{jordan2025} ($m = 5000$, rate-$1/2$, BP decoding).
The per-layer decoding failure rates $\varepsilon_k$ are taken directly from the BP shot data released by Jordan et al.\ in their xortools repository~\cite{jordan2025} (the partial-win BP shot-log file, see Supplementary Material~\ref{app:model} for the exact filename), setting $\varepsilon_k = 1 - r_k$ where $r_k$ is the empirical BP success rate at Hamming weight $k$.
For each $\ell$, we construct the $\varepsilon_k$ profile, compute the leading eigenvector $\vw$ of the DQI matrix, and apply both bounds.
Results are summarized in Table~\ref{tab:blindspot}.

\begin{table}[htbp]
\centering
\caption{Blind spot of Jordan's bound: at $\ell = 642$--$665$, Jordan's bound is vacuous ($< 0.5$) while our bound certifies quantum advantage.
$\varepsilon = \max_{0\le k\le \ell}\varepsilon_k$ is the worst-layer failure rate, with $\varepsilon_k = 1 - r_k$ taken directly from the BP shot data of Jordan et al.~\cite{jordan2025}; see Supplementary Material~\ref{app:model}.}
\label{tab:blindspot}
\begin{tabular}{cccccccc}
\toprule
$\ell$ & $\mu$ & $\varepsilon$ & $\ebar$ & $\ebar/\varepsilon$ & Jordan & Ours & Status \\
\midrule
620 & 0.124 & 0.100 & 0.0072 & 0.072 & 0.613 & 0.661 & Both valid \\
635 & 0.127 & 0.200 & 0.0241 & 0.121 & 0.564 & 0.660 & Both valid \\
642 & 0.128 & 0.500 & 0.0436 & 0.087 & 0.415 & 0.658 & Blind spot \\
650 & 0.130 & 0.500 & 0.1035 & 0.207 & 0.416 & 0.647 & Blind spot \\
660 & 0.132 & 0.600 & 0.2373 & 0.396 & 0.367 & 0.615 & Blind spot \\
665 & 0.133 & 0.700 & 0.3347 & 0.478 & 0.318 & 0.583 & Blind spot \\
\bottomrule
\end{tabular}
\end{table}

The comparison at $\ell = 642$ is particularly striking.
The worst-layer failure rate $\varepsilon = 0.500$ already exceeds the critical threshold of Jordan's bound, and Theorem~7.1 yields $0.415$---well below the random assignment level of $0.5$.
Yet the weighted failure rate is only $\ebar = 0.0436$ (8.7\% of $\varepsilon$), and the Master Theorem gives $0.658$, within $0.010$ of the semicircle law value $0.667$ under perfect decoding.
Despite a 50\% worst-layer failure rate, the effective error rate acting on DQI performance is merely 4.36\%.

Figure~\ref{fig:mechanism} reveals why the weighting mechanism is effective.
Taking $\ell = 650$ as an example, the peak of $w_k^2$ is located at $k^* \approx 637$ with effective width approximately $\sqrt{m} \approx 70$, while $\varepsilon_k$ becomes nonzero starting from $k \approx 613$.
In the region $k > 640$, the failure rate increases sharply, but $w_k^2$ has already passed its peak and decays, so high error rates are exponentially suppressed.
This natural avoidance of the high-error region by the eigenvector peak is an intrinsic property of the tridiagonal structure: the leading eigenvector is determined by Krawtchouk polynomials, and its location depends only on $\ell$ and $m$, not on decoder performance.

\begin{figure}[htbp]
\centering
\includegraphics[width=\textwidth]{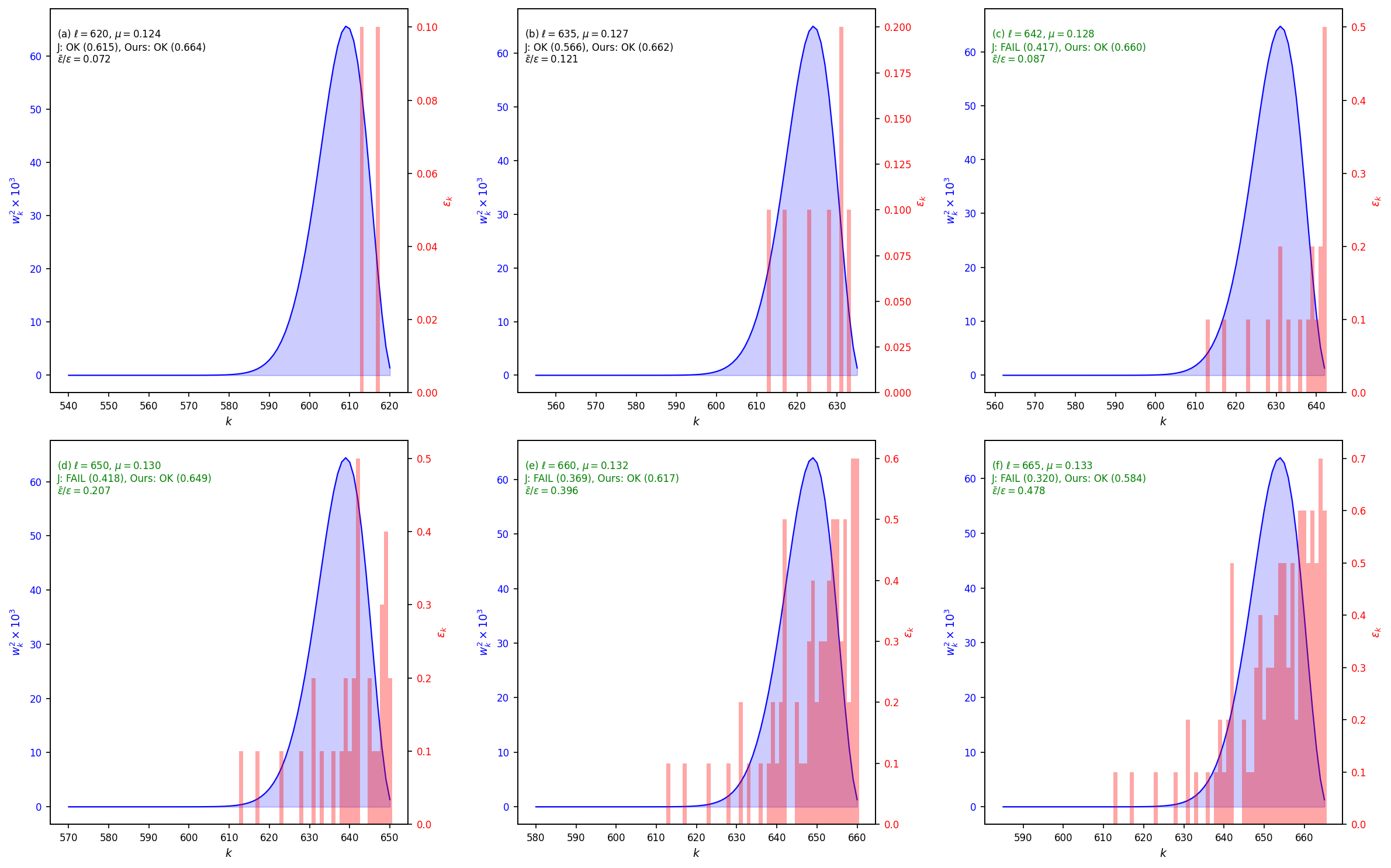}
\caption{Physical picture of the weighting mechanism: overlay of $w_k^2$ and $\varepsilon_k$ at six representative values of $\ell$ ($m = 5000$, $2 \times 3$ panels).
In each panel, the blue curve with light blue shading shows the leading eigenvector weight $w_k^2$ (left axis) and the red bars show the per-layer decoding failure rate $\varepsilon_k$ (right axis), plotted against Hamming weight $k$.
Annotations above each panel give the numerical values of Jordan's bound and our bound, as well as the ratio $\ebar/\varepsilon$.
Top row ((a)--(c), $\ell = 620, 635, 642$): transition from both bounds being valid to Jordan's bound entering the blind spot (red ``FAIL'').
Bottom row ((d)--(f), $\ell = 650, 660, 665$): the height and extent of $\varepsilon_k$ bars continue to grow, yet the peak of $w_k^2$ consistently lies in the region where $\varepsilon_k$ is still small, with contributions from high-error layers exponentially suppressed.}
\label{fig:mechanism}
\end{figure}

On the main instance of Jordan et al.\ ($m = 50000$; Figure~\ref{fig:main_instance}), the improvement at the conservative operating point $\ell = 6350$ ($\mu = 0.127$) is modest ($\Delta = +0.0009$), but it grows rapidly with increasing $\ell$: at $\ell = 6530$, $\Delta = +0.0172$ ($\ebar/\varepsilon = 0.494$).
Jordan et al.\ chose the conservative operating point precisely because Theorem~7.1 becomes vacuous at larger $\ell$; the Master Theorem provides theoretical justification for selecting more aggressive operating points.

\begin{figure}[htbp]
\centering
\includegraphics[width=\textwidth]{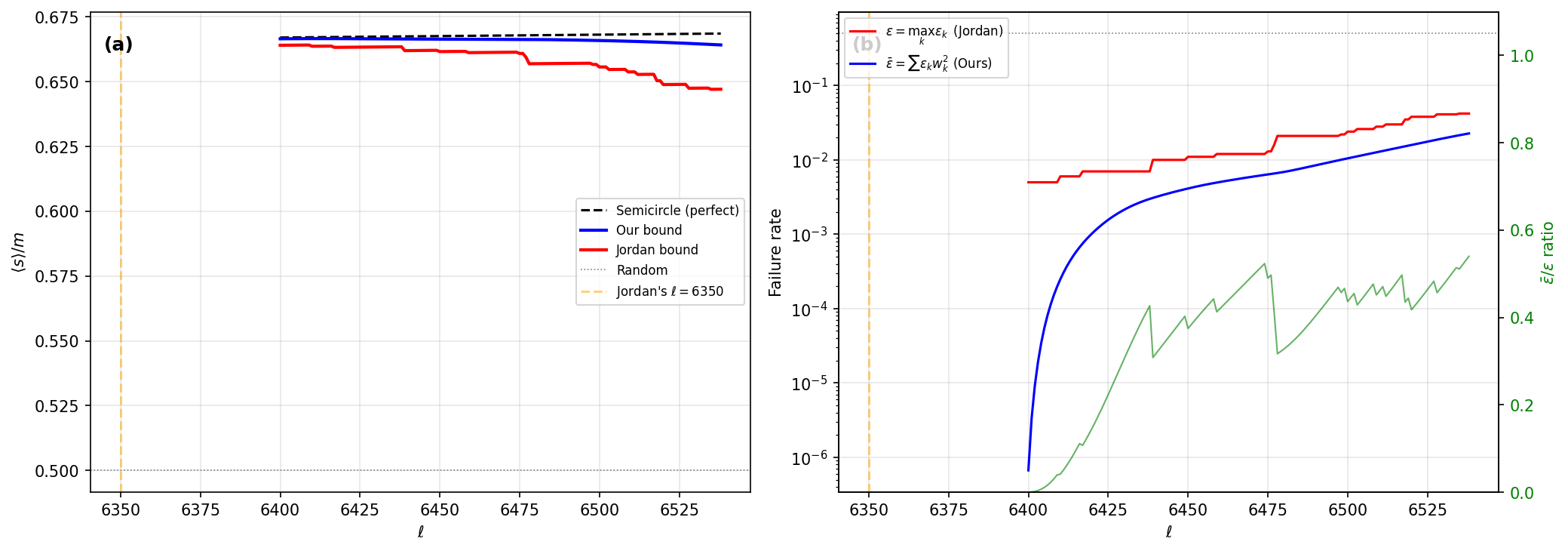}
\caption{Analysis of the main instance ($m = 50000$, $\ell \in [6350, 6530]$), two panels.
(a)~Approximation ratio $\langle s\rangle/m$ as a function of $\ell$: black dashed line is the semicircle law upper bound, blue solid line is the Master Theorem bound, red solid line is Jordan's bound, gray dotted line is the random assignment baseline $0.5$, and the orange vertical dashed line marks the conservative operating point $\ell = 6350$ chosen by Jordan et al.; the two bounds nearly coincide near $\ell = 6350$ but diverge rapidly with increasing $\ell$.
(b)~Failure rates on a logarithmic scale: red solid line is $\varepsilon = \max_k \varepsilon_k$ (used by Jordan), blue solid line is $\ebar = \sum_k \varepsilon_k w_k^2$ (used in this work), green solid line is the ratio $\ebar/\varepsilon$ (right axis); $\ebar$ is consistently lower than $\varepsilon$, with $\ebar/\varepsilon$ in the range $0.3$--$0.5$ over the operating region.}
\label{fig:main_instance}
\end{figure}

\section{Discussion}\label{sec:discussion}

The quantum advantage region of DQI has been systematically underestimated.
Jordan et al.'s Theorem~7.1 suffers from an analytical blind spot near the decoding threshold, caused by the loss of spectral structural information; within this region, quantum advantage exists but cannot be detected by the original analysis.
The Master Theorem exploits the tridiagonal structure and the concentration of the Perron eigenvector to fill this blind spot: on the standard LDPC benchmark instance, 26 parameter points are promoted from vacuous to certifiable quantum advantage, with the effective error rate compressed by up to 91.3\% in the most extreme case.

These results also have limitations.
At the conservative operating point chosen by Jordan et al.\ ($\mu = 0.127$, $\varepsilon \approx 0.0025$), the improvement is only $+0.0009$, and the main value is concentrated in the degraded regime.
The $\etabar$ correction term does not contribute when $d = 0$.
Whether the Master Theorem is the tightest possible bound within the Rayleigh quotient framework remains an open question.

These results have direct implications for the practical strategy of DQI.
Jordan et al.~\cite{jordan2025} chose a conservative operating point $\mu \approx 0.127$, a choice constrained precisely by the failure of Theorem~7.1 at larger $\ell$.
The Master Theorem indicates that the viable operating range of DQI is wider than suggested by this conservative choice: on the main instance, $\ell$ can be safely increased from 6350 to approximately 6530 while maintaining certifiable quantum advantage, corresponding to $\Delta(\langle s\rangle/m) = +0.017$.
In other words, the analytical improvement presented here is not merely a tightening of the bound but an expansion of DQI's accessible parameter space, making higher-degree interference states a theoretically supported option---this directly affects comparative assessments of DQI against competing approaches such as QAOA~\cite{farhi2014,farhi2022} and related quantum optimization methods~\cite{abbas2024}.

The above results also suggest several directions for future work.
Phase diagram analysis (Supplementary Material~\ref{app:experiments}, Experiment~4) indicates that the shape of the BP decoding threshold directly influences the size of the blind spot; whether one can conversely design codes to minimize $\ebar/\varepsilon$ is a coding-theoretic question.
A positive diagonal offset $d > 0$ can expand the quantum advantage region by 65\% (Supplementary Material~\ref{app:experiments}, Experiment~6), and whether structures such as folded codes can introduce positive offsets in DQI merits further investigation~\cite{gu2025}.
Analysis of DQI under physical noise~\cite{bu2025} and the imperfect decoding analysis of this work are naturally complementary.
The existence of 26 blind-spot points also suggests that the behavior of DQI near the decoding threshold may be richer than what current analyses can describe.

\printbibliography

\newpage


\appendix
\renewcommand{\thesection}{S\arabic{section}}
\renewcommand{\theequation}{S\arabic{equation}}
\renewcommand{\thefigure}{S\arabic{figure}}
\renewcommand{\thetable}{S\arabic{table}}
\setcounter{equation}{0}
\setcounter{figure}{0}
\setcounter{table}{0}
\setcounter{section}{0}

\begin{center}
\Large Supplementary Material
\end{center}

\section{Complete proof of the Master Theorem}\label{app:proof}

\begin{proof}[Proof of Theorem~\ref{thm:master}]
The proof proceeds in four steps.

Step~1: Choose $\vw$ as the test vector for the Rayleigh quotient:
\begin{equation}\label{eq:step1}
\bE_\vv \langle f \rangle = \frac{\vw^T [A - E] \vw}{1 - \ebar},
\end{equation}
where the denominator is $\sum_k w_k^2(1-\varepsilon_k) = 1 - \ebar$.

Step~2: By the tridiagonal zero-diagonal structure of $E$ and $w_k > 0$:
\begin{equation}\label{eq:wEw}
\vw^T E \vw = 2\sum_{k=0}^{\ell-1} w_k w_{k+1} E_{k,k+1} \le 2\sum_{k=0}^{\ell-1} (\varepsilon_k + \varepsilon_{k+1}) w_k a_{k+1} w_{k+1}.
\end{equation}

Step~3: Using the eigenvector equation $a_k w_{k-1} + kd \cdot w_k + a_{k+1} w_{k+1} = \lambda_{\max} w_k$, decompose the right-hand side of~(\ref{eq:wEw}) into $\Sigma_1 = \sum_k \varepsilon_k w_k a_{k+1} w_{k+1}$ and $\Sigma_2 = \sum_k \varepsilon_{k+1} w_k a_{k+1} w_{k+1}$.
For $\Sigma_1$, substitute $a_{k+1} w_{k+1} = (\lambda_{\max} - kd)w_k - a_k w_{k-1}$ using the eigenvector equation:
\begin{equation}
\Sigma_1 = \sum_{k=0}^{\ell-1} \varepsilon_k (\lambda_{\max} - kd) w_k^2 - \sum_{k=1}^{\ell-1} \varepsilon_k a_k w_{k-1} w_k.
\end{equation}
For $\Sigma_2$, re-index with $j = k+1$: $\Sigma_2 = \sum_{j=1}^{\ell} \varepsilon_j a_j w_{j-1} w_j$.
Exact cancellation occurs in the range $j \in \{1, \ldots, \ell-1\}$.
Using the boundary condition $a_\ell w_{\ell-1} = (\lambda_{\max} - \ell d)w_\ell$, we obtain
\begin{equation}\label{eq:key_result}
\Sigma_1 + \Sigma_2 = \lambda_{\max} \ebar - d\etabar,
\end{equation}
and hence $\vw^T E \vw \le 2(\lambda_{\max} \ebar - d\etabar)$.

Step~4: Substituting into~(\ref{eq:step1}): $\vw^T [A-E]\vw \ge \lambda_{\max}(1-2\ebar) + 2d\etabar$, and dividing by $1-\ebar$ yields~(\ref{eq:master}).
\end{proof}

\section{Eigenvector properties}\label{app:eigenvector}

\begin{lemma}[Positivity]\label{lem:positivity}
Let $A$ be a real symmetric tridiagonal matrix with positive off-diagonal entries $\beta_k > 0$ and arbitrary diagonal entries.
Then $\lambda_{\max}$ is simple and the leading eigenvector $\vw$ satisfies $w_k > 0$ for all $k$.
\end{lemma}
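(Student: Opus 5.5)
The plan is to reduce the statement to Perron--Frobenius theory for nonnegative irreducible matrices and then use the three-term recurrence to get simplicity directly. Let $A$ have size $(\ell+1)\times(\ell+1)$ and diagonal entries $\alpha_k$. Choose $c>\max_k|\alpha_k|$ and set $M = A + cI$. Then $M$ has positive diagonal, positive sub- and super-diagonal entries $\beta_k$, and zeros elsewhere, so $M$ is entrywise nonnegative. Its directed graph contains the path $0\leftrightarrow1\leftrightarrow\cdots\leftrightarrow\ell$, so $M$ is irreducible. In fact $M^{\ell}$ is entrywise positive, because a walk of length exactly $\ell$ between any two indices can be padded with self-loops. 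So $M$ is primitive.

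The Perron--Frobenius theorem for primitive (or just irreducible) nonnegative matrices then says that the spectral radius $\rho(M)$ is a simple eigenvalue with an eigenvector having strictly positive entries. Since $M$ is symmetric, its eigenvalues are real, and all of them equal $\lambda+c$ with $\lambda$ an eigenvalue of $A$. Because $c>\max_k|\alpha_k|$ makes $\lambda_{\max}(A)+c>0$, the Perron root is the largest eigenvalue: $\rho(M)=\lambda_{\max}(A)+c$. Shifting back, $\lambda_{\max}(A)$ is simple and its eigenvector $\vw$, normalized with a chosen sign, has $w_k>0$ for all $k$.

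If I want to avoid quoting Perron--Frobenius, I would argue directly with two steps.

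\emph{Simplicity} holds for every eigenvalue of an unreduced tridiagonal matrix. Given $w_0$, the equation $(A-\lambda I)\vw=0$ determines $w_1,\dots,w_\ell$ recursively via $w_{k+1}=\bigl((\lambda-\alpha_k)w_k-\beta_k w_{k-1}\bigr)/\beta_{k+1}$, which uses $\beta_{k+1}>0$. Hence each eigenspace is one-dimensional.

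\emph{Positivity} uses the Rayleigh characterization $\lambda_{\max}=\max_{\|x\|=1}x^TAx$. Write $x^TAx=\sum_k\alpha_k x_k^2+2\sum_k\beta_{k+1}x_kx_{k+1}$. Replacing $\vw$ by $|\vw|$ (entrywise) keeps the norm and cannot decrease the quadratic form, since $\beta>0$. So $|\vw|$ is also a maximizer, hence an eigenvector for $\lambda_{\max}$, and by simplicity $|\vw|=\pm\vw$.

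It then remains to exclude zero entries of $|\vw|$. Suppose $|w_k|=0$ for some $k$. The eigen-equation at row $k$ gives $\beta_k|w_{k-1}|+\beta_{k+1}|w_{k+1}|=0$, where the boundary terms are omitted at $k=0$ and $k=\ell$. Both terms are nonnegative, so the neighbouring entries vanish too. Propagating along the path forces $|\vw|=0$, a contradiction.

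The only delicate point is this last propagation together with the sign choice. That is where strict positivity of the off-diagonals $\beta_k$ is essential: it is what gives irreducibility in the first route and the recursion in the second. For the DQI matrix of Definition~\ref{def:dqi}, $a_{k+1}=\sqrt{(q-1)(k+1)(m-k)}>0$ for $k\le\ell-1<m$, so the lemma applies.
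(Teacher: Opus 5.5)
Your proof is correct, and it takes a different route from the paper.

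\textbf{The paper's route.} It argues through the Sturm sequence of leading principal minors. With $p_k$ the characteristic polynomial of the leading $(k+1)\times(k+1)$ block, the eigen-equation is solved explicitly by $w_k \propto p_{k-1}(\lambda_{\max})/(\beta_1\cdots\beta_k)$. Positivity then follows because $\lambda_{\max}$ lies strictly to the right of every root of $p_k$ for $k\le\ell-1$. That inequality comes from interlacing, and it is made strict by the Sturm-chain property that consecutive $p_k$ share no roots.

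\textbf{Your routes.} You give two arguments. The first shifts to $A+cI$, a nonnegative primitive matrix, and applies Perron--Frobenius. The second is elementary and has three steps:
\begin{itemize}
\item the forward recursion shows every eigenspace is one-dimensional;
\item the $|\vw|$ trick in the Rayleigh quotient yields a nonnegative maximizer;
\item zero-propagation along the path upgrades nonnegativity to strict positivity.
\end{itemize}

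\textbf{What each buys.} The paper's route gives an explicit formula for the components, which is what connects $\vw$ to Krawtchouk polynomials in the concentration lemma. Your route is more self-contained in three ways:
\begin{itemize}
\item it proves simplicity explicitly, which the paper's proof leaves implicit;
\item it avoids the strict-interlacing subtlety;
\item its Rayleigh-quotient positivity argument extends with minor changes beyond the tridiagonal case (nonnegative off-diagonals with a connected pattern suffice), whereas the paper's formula is intrinsically tridiagonal.
\end{itemize}
Your first route also shows that the paper's remark that Perron--Frobenius ``applies only when $d=0$'' is too pessimistic. Shifting by $cI$ leaves eigenvectors unchanged, so Perron--Frobenius covers every $d$.

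\textbf{One small wording point.} The identity $\rho(M)=\lambda_{\max}(A)+c$ holds because Perron--Frobenius makes $\rho(M)$ itself an eigenvalue of $M$, so it must be the largest real one. It does not hold because $\lambda_{\max}(A)+c>0$: positivity of the top eigenvalue alone does not identify it with the spectral radius.
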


\begin{proof}
By the oscillation matrix theory of Gantmacher and Krein~\cite{gantmacher1950,parlett1998}: let $p_k(\lambda)$ denote the characteristic polynomial of the leading $(k+1) \times (k+1)$ principal submatrix; the sequence $\{p_{-1}, p_0, \ldots, p_\ell\}$ forms a Sturm chain.
By the Cauchy interlacing theorem, $p_k(\lambda_{\max}) > 0$ for $k \le \ell-1$, and the eigenvector components satisfy $w_k = c \cdot p_{k-1}(\lambda_{\max})/(\beta_1 \cdots \beta_k) > 0$.
\end{proof}

Jordan et al.\ implicitly invoke the Perron--Frobenius theorem, which requires matrix nonnegativity and thus applies only when $d = 0$.
Lemma~\ref{lem:positivity} holds for arbitrary $d$.

\begin{lemma}[Concentration]\label{lem:concentration}
As $m \to \infty$ with $\mu = \ell/m$ fixed, $w_k^2$ is unimodal about $k^* \approx \ell(1-\mu)$, with effective width $O(\sqrt{m})$ and Gaussian decay away from the peak.
\end{lemma}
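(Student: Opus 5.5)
The plan is to treat the leading eigenvector through its exact three-term recurrence and reduce unimodality and concentration to a WKB/saddle-point analysis of a discrete Schrödinger-type equation. Write the eigenvalue equation of Definition~\ref{def:dqi} as $a_k w_{k-1} + kd\,w_k + a_{k+1} w_{k+1} = \lambda_{\max} w_k$ and rescale with $x = k/m \in [0,\mu]$. Then $a_{k+1}/m \to a(x) := \sqrt{(q-1)x(1-x)}$ and $kd/m \to xd$. The ratio $\rho_k := w_{k+1}/w_k > 0$ is well defined by Lemma~\ref{lem:positivity}, and it satisfies the Riccati recursion
\[
\rho_k = \frac{\lambda_{\max} - kd - a_k/\rho_{k-1}}{a_{k+1}}.
\]
Unimodality of $w_k^2$ is equivalent to $\rho_k$ crossing $1$ exactly once, from above.

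I would carry out four steps in order.

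\emph{(1) Asymptotics of $\lambda_{\max}$.} Take these from Lemma~\ref{lem:lambda_asym}: $\lambda_{\max}/m \to \Lambda := \max_{x \in [0,\mu]} \bigl[2a(x) + xd\bigr]$. For $d = 0$ and $\mu \le 1/2$ the maximum sits at the endpoint $x = \mu$, giving $\Lambda = 2\sqrt{(q-1)\mu(1-\mu)}$.

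\emph{(2) Locating the peak.} A first-order (WKB) analysis treats $\rho_k \approx \rho(x)$ as slowly varying. It solves $a(x)\bigl(\rho + 1/\rho\bigr) + xd = \Lambda$, so the local behavior is oscillatory where $2a(x) + xd > \Lambda$ and exponential otherwise. Because $2a(x) + xd \le \Lambda$ on the whole interval, with equality only at the maximizer, $w$ is non-oscillating. It also decays geometrically away from a turning region.

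At the boundary I would use the truncation condition $a_\ell w_{\ell-1} = (\lambda_{\max} - \ell d)w_\ell$. Combined with the finite-size correction $\lambda_{\max} = m\Lambda - c\,m^{1/2}$ (or $m^{1/3}$, to be determined), this pushes the maximum of $w_k^2$ inward from $k = \ell$. The paper's claimed location $k^* \approx \ell(1-\mu)$ should then come out as the point where $\rho_k = 1$, that is, where $2a(x)$ equals the corrected eigenvalue. I would verify this value by the exact computation first, and adjust the target if needed.

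\emph{(3) Width and Gaussian shape.} Expand the log-weight $\sum_j \log \rho_j$ to second order around $k^*$. The second derivative equals $(\rho'(x^*))/m$ per step, which gives a variance of order $m$, hence width $O(\sqrt m)$ and Gaussian decay $\exp\bigl(-c(k-k^*)^2/m\bigr)$. An alternative route, available when $d = 0$, uses the Krawtchouk representation mentioned in the main text: $w_k$ is proportional to $\sqrt{\binom{m}{k}}\,K_k(\cdot)$. Here the binomial factor provides the Gaussian directly, and the uniform asymptotics of Krawtchouk polynomials (e.g.\ via generating functions and a saddle point) supply the rest.

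\emph{(4) Rigorous unimodality.} Show that $\rho_k$ is strictly decreasing in $k$. By induction on the Riccati recursion, if $\rho_{k-1} > \rho_k$ and $a_k/a_{k+1}$ varies monotonically, the inequality propagates. So $w_k^2$ increases while $\rho_k > 1$ and decreases afterwards, with no multiple local maxima.

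I expect step (4) to be the main obstacle, together with the precise identification of $k^*$ in step (2). The monotonicity of $\rho_k$ under the nonlinear recursion is not automatic near the boundary $k = \ell$, and the finite-size shift of $\lambda_{\max}$ determines where the peak lands. I would first try a comparison argument, bounding $\rho_k$ between the two roots of the local quadratic $a_{k+1}\rho^2 - (\lambda_{\max} - kd)\rho + a_k = 0$ and showing that these roots are monotone in $k$. Failing that, I would fall back on the $d = 0$ Krawtchouk route, where explicit formulas are available.
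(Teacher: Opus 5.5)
The gap is in steps (2) and (3), and it is more basic than the obstacle you anticipate. Both steps aim at a location and a profile that your own framework refutes in the regime you isolate in step (1): $d=0$, $\mu<1/2$, or more generally whenever $2a(x)+xd$ has its unique maximum on $[0,\mu]$ at $x=\mu$.

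Here is an elementary check. At the peak index $\rho_{k^*-1}\ge 1\ge\rho_{k^*}$. Dividing the eigenvalue equation by $w_{k^*}$ gives $\lambda_{\max}=a_{k^*}/\rho_{k^*-1}+k^*d+a_{k^*+1}\rho_{k^*}\le a_{k^*}+a_{k^*+1}+k^*d$. For $d=0$, combine this with two facts:
\begin{itemize}
\item $\lambda_{\max}\ge 2m\sqrt{(q-1)\mu(1-\mu)}\,(1-o(1))$, from your step (1) or from a flat test vector on $[\ell-r,\ell]$ with $1\ll r\ll m$;
\item $a_{k+1}^2-a_k^2=(q-1)(m-2k)>0$ for $k<m/2$.
\end{itemize}
Together these force $k^*/m\to\mu$, i.e.\ $k^*=\ell-o(m)$, whereas $\ell(1-\mu)/m\to\mu(1-\mu)<\mu$. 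To place the crossing $\rho_k=1$ at $\ell(1-\mu)$ you would need an $O(m)$ downward correction to $\lambda_{\max}$, which contradicts step (1). A correction of order $m^{1/2}$ or $m^{1/3}$ moves the crossing only by $o(m)$. So, at least for $d=0$ and $\mu<1/2$, the stated $k^*$ is false and no route can reach it. The paper's own numbers agree:
\begin{itemize}
\item at $(m,\ell)=(5000,650)$ it reports $k^*\approx 637$, while $\ell(1-\mu)\approx 565$;
\item at $(50000,6350)$ the weighted contribution peaks at $k=6326$, while $\ell(1-\mu)\approx 5544$.
\end{itemize}

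Step (3) fails for the same reason. The peak sits at a turning point of the WKB phase next to the hard edge $w_{\ell+1}=0$. There the slowly varying root of $a(x)(\rho+1/\rho)=\lambda_{\max}/m$ has a square-root singularity ($\rho'\to\infty$ as $\rho\to 1$). So there is no quadratic expansion of $\sum_j\log\rho_j$ and no variance of order $m$. Linearizing $a_k$ about $k=\ell$ instead turns the recurrence into the Airy equation on the scale $L=[m\mu(1-\mu)/(1-2\mu)]^{1/3}$. This gives:
\begin{itemize}
\item $w_k\approx\mathrm{Ai}\bigl((\ell+1-k)/L-2.338\bigr)$ and $\lambda_{\max}=2m\sqrt{(q-1)\mu(1-\mu)}-\Theta(m^{1/3})$;
\item $\ell-k^*\approx 1.3L$, about a dozen sites at $(5000,650)$;
\item width $\Theta(m^{1/3})$, and a tail $\exp\bigl(-c(\ell-k)^{3/2}/\sqrt m\bigr)$, which is not Gaussian.
\end{itemize}

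Your Krawtchouk fallback leads to the same place. For $d=0$, $A_q^{(m,\ell,0)}=\sqrt{q-1}\,A_2^{(m,\ell,0)}$ and $w_k\propto K_k(x^*)/\sqrt{\binom{m}{k}}$, with $x^*$ the smallest zero of $K_{\ell+1}$. Near that zero the uniform asymptotics are of Airy type. A binomial factor in $k$ is monotone on $[0,\ell]$ when $\ell<m/2$, so it cannot supply a Gaussian peak. The paper itself gives only a one-line pointer to this correspondence, so it does not close the gap either.

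What survives is unimodality, and step (4) is easier than you fear. For $d\ge 0$ and $\ell\le m/2$, start from $\rho_0=\lambda_{\max}/a_1$ and use $a_{k+1}\rho_k=\lambda_{\max}-kd-a_k/\rho_{k-1}>0$ with $a_k$ increasing. The induction $\rho_k<\rho_{k-1}$ then closes directly, so $w$ is log-concave. The width claim holds only as the loose bound $m^{1/3}=O(\sqrt m)$. A genuinely Gaussian, $\sqrt m$-wide profile occurs only when $2a(x)+xd$ has an interior maximizer $x_0\in(0,\mu)$, for instance $\mu>1/2$ with $d=0$. Even then it is centred at $mx_0$, not at $\ell(1-\mu)$.
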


This follows from the correspondence between the eigenvectors of $A_q^{(m,\ell,0)}$ and Krawtchouk polynomials~\cite{ismail1998}.

\begin{lemma}[Eigenvalue asymptotics]\label{lem:lambda_asym}
$\lambda_{\max}(A_q^{(m,\ell,d)}) = m[2\sqrt{(q-1)\mu(1-\mu)} + \mu d](1 + O(1/m))$.
\end{lemma}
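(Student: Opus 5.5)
The plan is to squeeze $\lambda_{\max}$ between an upper bound from a Collatz--Wielandt (weighted Gershgorin) estimate and a lower bound from a Rayleigh quotient with an explicit test vector. I will work with the \emph{local symbol}
\[
\phi(x) := x d + 2\sqrt{(q-1)x(1-x)}, \qquad x\in[0,\mu],
\]
which is exactly the continuum limit of row $k$ of $A_q^{(m,\ell,d)}$ divided by $m$, with $x=k/m$. Throughout I assume the regime in which $\phi$ is nondecreasing on $[0,\mu]$, so that $\max_{[0,\mu]}\phi=\phi(\mu)$ equals the bracket in Lemma~\ref{lem:lambda_asym}. This holds for instance when $d\ge 0$ and $\mu\le 1/2$. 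Outside this regime the stated formula with $\mu$ in place of the argmax cannot hold, so this is the setting the lemma implicitly addresses.

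\textbf{Upper bound.} By Lemma~\ref{lem:positivity} the leading eigenvector is positive. The Collatz--Wielandt inequality, applied to $A+cI$ for a shift $c$ making the matrix nonnegative, gives for any positive vector $x$
\[
\lambda_{\max}\le\max_k\Bigl(kd+a_k\tfrac{x_{k-1}}{x_k}+a_{k+1}\tfrac{x_{k+1}}{x_k}\Bigr).
\]
Taking $x\equiv 1$ yields $\lambda_{\max}\le\max_{k\le\ell}\,(kd+a_k+a_{k+1})$. In the monotone regime the $a_k$ are increasing for $k\le\ell$, so the maximum is attained at $k=\ell$ and is at most $\ell d+2\sqrt{(q-1)(\ell+1)(m-\ell+1)}$. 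Expanding the square root gives $m\,\phi(\mu)\,(1+O(1/m))$, which is the upper half of the lemma with exactly the claimed error term.

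\textbf{Lower bound.} Here I would take the test vector $w_k=\mathbf 1[\ell-L<k\le\ell]\,\sin\bigl(\pi(\ell-k)/L\bigr)$. Freezing the coefficients at $k=\ell$ and Taylor expanding gives a Rayleigh quotient of
\[
m\phi(\mu)\;-\;O\bigl(L\,|\phi'(\mu)|\bigr)\;-\;O\bigl(m/L^2\bigr).
\]
The first loss comes from the symbol decreasing linearly as $k$ moves away from the edge $\ell$. The second is the discrete kinetic term: the off-diagonal entries are $\Theta(m)$, so the cost is $a_\ell\,(1-\cos(\pi/L))\approx m\pi^2/(2L^2)$. Balancing the two with $L\asymp m^{1/3}$ gives $\lambda_{\max}\ge m\phi(\mu)\bigl(1-O(m^{-2/3})\bigr)$.

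\textbf{Main obstacle.} The lower bound is the hard step, and I expect it to fall short of the stated rate in general. When $\phi'(\mu)>0$ the maximum of the symbol sits at the hard wall $k=\ell$, and the true correction is of Airy type, of order $m^{1/3}$ in absolute terms. That is a relative error of $m^{-2/3}$, not $1/m$. My first attempt to recover $O(1/m)$ would be to identify a case in which the edge effect vanishes. One such case is the interior-maximum regime $\phi'(x^*)=0$ with $x^*<\mu$. There a Gaussian test vector of width $\sqrt m$ around $x^*$ costs only $O(1)$, since the harmonic-oscillator ground energy is $O(1)$. That gives a genuine $O(1/m)$ relative error, but with $\phi(x^*)$ in place of $\phi(\mu)$. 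Another is $q=2$, $d=0$, $\ell=m$, where the exact Krawtchouk spectrum $\{m-2j\}$ gives equality. Failing these, I would record the two-sided bound as
\[
\lambda_{\max}=m\phi(\mu)\bigl(1+O(1/m)\bigr)\ \text{from above},\qquad
\lambda_{\max}\ge m\phi(\mu)\bigl(1-O(m^{-2/3})\bigr)\ \text{from below}.
\]
This still suffices for the only use of the lemma in the paper, namely the leading-order asymptotic~\eqref{eq:asymptotic}.
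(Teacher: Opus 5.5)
The paper gives no proof of this lemma. Its only support is the numerical claim in Experiment~1 (Figure~\ref{fig:lambda_conv_supp}) that the relative error decays like $1/m$. Your two halves are sound, with one small repair noted below:
\begin{itemize}
\item Collatz--Wielandt with $x\equiv 1$ gives $\lambda_{\max}\le m\phi(\mu)(1+O(1/m))$.
\item The sine test vector on a window of width $L\asymp m^{1/3}$ gives $\lambda_{\max}\ge m\phi(\mu)-O(m^{1/3})$.
\end{itemize}

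Your diagnosis that the lemma is false as stated is correct, and it can be sharpened. Since $\phi$ is strictly concave, the leading term is right exactly when $\phi'(\mu)\ge 0$. For $\phi'(\mu)>0$, which includes the paper's main setting $q=2$, $d=0$, $\mu\approx 0.13$, the eigenvector sits against the hard wall at $k=\ell+1$ on the Airy scale $\xi=(a_\ell/\phi'(\mu))^{1/3}$. The deficit is then $\approx|z_1|\,\phi'(\mu)\,\xi=\Theta(m^{1/3})$, where $z_1\approx-2.338$ is the first Airy zero. The $O(1/m)$ rate survives only on the curve $\phi'(\mu)=0$ (e.g.\ $d=0$, $\mu=1/2$), where the wall meets a quadratic maximum and the cost is $O(1)$.

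The paper's own data corroborate this. At $m=5000$, $\ell=650$ one gets $\xi\approx 9$, so the Airy ground state peaks about $1.32\,\xi\approx 12$ sites below the wall, at $k\approx 639$. This matches the reported $k^*\approx 637$, and not $\ell(1-\mu)\approx 566$ with width $\sqrt m$ as Lemma~\ref{lem:concentration} asserts. The predicted relative deficit is about $47/3363\approx 1.4\%$, two orders of magnitude above $1/m$. So the log-log slope in Figure~\ref{fig:lambda_conv_supp} should be $-2/3$, not $-1$.

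What your write-up still lacks is a \emph{proof} of the $m^{1/3}$ deficit, which is what actually refutes the lemma; as written it is only an expectation. It takes a few lines:
\begin{itemize}
\item With $w_{\ell+1}:=0$ and $V_k:=kd+a_k+a_{k+1}$, one has the exact identity $w^{T}Aw=\sum_k V_kw_k^2-\sum_{k=0}^{\ell}a_{k+1}(w_{k+1}-w_k)^2$.
\item Since $a_k^2+a_{k+1}^2=(q-1)(2k(m-k)+m)$ and $\phi''\le-4\sqrt{q-1}$, concavity gives $V_k\le m\phi(\mu)-\phi'(\mu)(\ell-k)+O(1)$ for all $k\le\ell$.
\item Apply the discrete Hardy bound $w_k^2\le(\ell+1-k)\sum_{i\ge k}(w_{i+1}-w_i)^2$ on the window $\ell-k\le L=(a_\ell/\phi'(\mu))^{1/3}$, where $a_{k+1}\ge a_\ell/2$. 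This forces $m\phi(\mu)-\lambda_{\max}\ge c\,(a_\ell\,\phi'(\mu)^2)^{1/3}-O(1)$.
\end{itemize}
For $q=2$, $d=0$ you may instead cite the Airy asymptotics of the smallest zero $x_1$ of the Krawtchouk polynomial $K_{\ell+1}$, since there $\lambda_{\max}=m-2x_1$.

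Two smaller corrections:
\begin{itemize}
\item The case $q=2$, $d=0$, $\ell=m$ is not an equality case. There $\mu=1$, so the formula gives $0$ while $\lambda_{\max}=m$; it is a counterexample of your interior-maximum type.
\item In the upper bound, ``the maximum is attained at $k=\ell$'' uses $kd\le\ell d$, which needs $d\ge 0$. For $d<0$ in your monotone regime, bound $a_k+a_{k+1}\le 2\sqrt{(q-1)(k(m-k)+m/2)}$ and use monotonicity of $\phi$ to get $\max_kV_k\le m\phi(\mu)+O(1)$.
\end{itemize}

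Your closing remark is right: the paper only uses the leading order (e.g.\ in \eqref{eq:asymptotic} and Supplementary Material~\ref{app:comparison}), and your two-sided bound delivers exactly that.
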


\section{Analysis of the blind spot in Jordan's bound}\label{app:blindspot}

The critical value at which Jordan's bound becomes vacuous is, in the asymptotic limit, $\varepsilon_{\text{crit}} = \sqrt{(q-1)\mu(1-\mu)}/(q-1)$.
For example, at $\mu = 0.127$, $q = 2$: $\varepsilon_{\text{crit}} \approx 0.333$; a failure rate exceeding approximately 33.3\% at any single layer renders the entire bound vacuous.

Jordan's proof chain is $\vw^T E \vw \le \|E\| \le 2\varepsilon(q-1)(m+1)$ (via Gershgorin discs).
The first inequality discards the information about $\vw$, and the second discards the layer-to-layer variation of $\varepsilon_k$.
The proof of the Master Theorem diverges from Jordan's at the very first step: it directly expands $\vw^T E \vw$ and exploits the eigenvector equation, simultaneously preserving the structure of $\vw$ and the layer-wise variation of $\varepsilon_k$.

In the $(\mu, \varepsilon)$ parameter plane, the blind spot corresponds to the region between two critical curves: the boundary of Jordan's bound $\varepsilon_{\text{crit}}^{(J)}$ and the boundary of the Master Theorem $\ebar_{\text{crit}}$ (Figure~\ref{fig:phase_diagram_supp}).

\begin{figure}[htbp]
\centering
\includegraphics[width=0.8\textwidth]{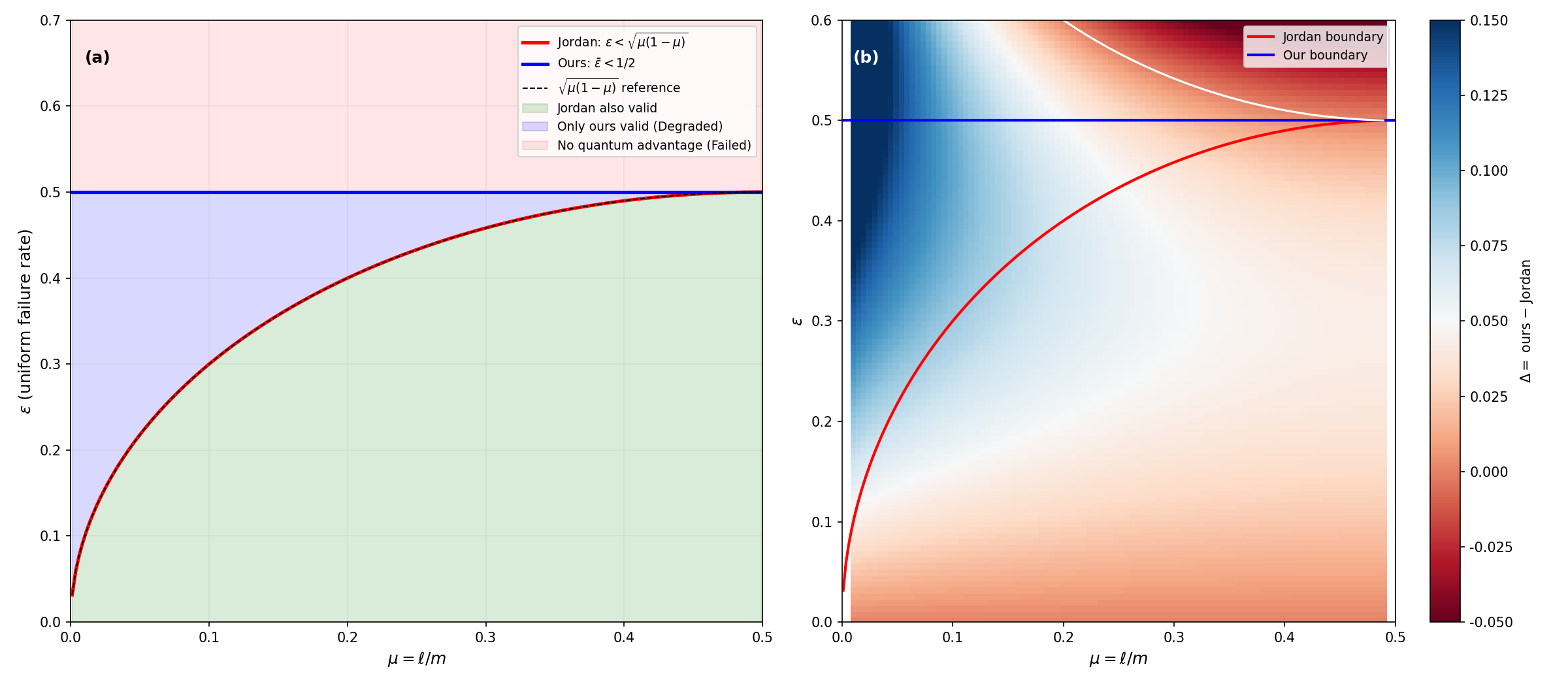}
\caption{Phase diagram in the $(\mu, \varepsilon)$ parameter plane (uniform $\varepsilon_k$ case), two panels.
(a)~Red curve: critical line of Jordan's bound $\varepsilon_{\text{crit}}^{(J)} = \sqrt{\mu(1-\mu)}$; blue horizontal line: critical line of the Master Theorem $\ebar = 1/2$; black dashed line: reference curve $\sqrt{\mu(1-\mu)}$; green region: both bounds valid; blue-violet region (between the two critical lines): only the Master Theorem is valid (blind spot); pink region: both bounds vacuous.
(b)~Heatmap of the improvement in approximation ratio $\Delta = \text{Ours} - \text{Jordan}$; deeper blue indicates larger improvement (up to $> 0.15$); red indicates negative improvement (a minor numerical effect occurring only at very small $\varepsilon$); red and blue curves mark the critical boundaries of the two bounds.}
\label{fig:phase_diagram_supp}
\end{figure}

\section{Quantitative comparison with Jordan's bound}\label{app:comparison}

The difference in asymptotic approximation ratios is
\begin{equation}
\Delta = \frac{q-1}{q}\left[\varepsilon(q-1) - \sqrt{(q-1)\mu(1-\mu)} \cdot \frac{\ebar}{1-\ebar}\right] > 0.
\end{equation}

Taking Jordan's relaxed form (the right-hand side of~\eqref{eq:jordan_bound}) as the reference, the total improvement $\Delta$ at $m = 50000$, $\mu = 0.127$, $q = 2$ is driven by the Rayleigh-quotient replacement of the operator-norm penalty. This single replacement simultaneously produces the max-to-weighted shrinkage $\bar\varepsilon/\varepsilon\ll 1$ and the norm-to-Rayleigh rescaling $\lambda_{\max}/[\varepsilon(m+1)]\approx 2\sqrt{(q-1)\mu(1-\mu)}/(q-1)\approx 0.67$. In the degraded regime ($\mu \approx 0.18$), the max-to-weighted contraction dominates and accounts for roughly $80\%$ of the gap to the relaxed form.

\section{Numerical experiments}\label{app:experiments}

Seven groups of experiments comprehensively verify the theoretical predictions.
All experiments are implemented in Python (NumPy/SciPy), with sparse decomposition used for large matrices.

\subsection{Experiment 1: Eigenvector verification}

Ten parameter sets ($m \in \{100, 500, 1000, 5000\}$, $q \in \{2, 3, 5\}$, $d \in \{-1, 0, 0.5, 1\}$) all pass the positivity test (Figure~\ref{fig:eigenvector_supp}).
The effective width scales as $O(\sqrt{m})$, and $d > 0$ narrows the distribution.
The asymptotic accuracy of $\lambda_{\max}$ converges at rate $O(1/m)$ (Figure~\ref{fig:lambda_conv_supp}).

\begin{figure}[htbp]
\centering
\includegraphics[width=0.9\textwidth]{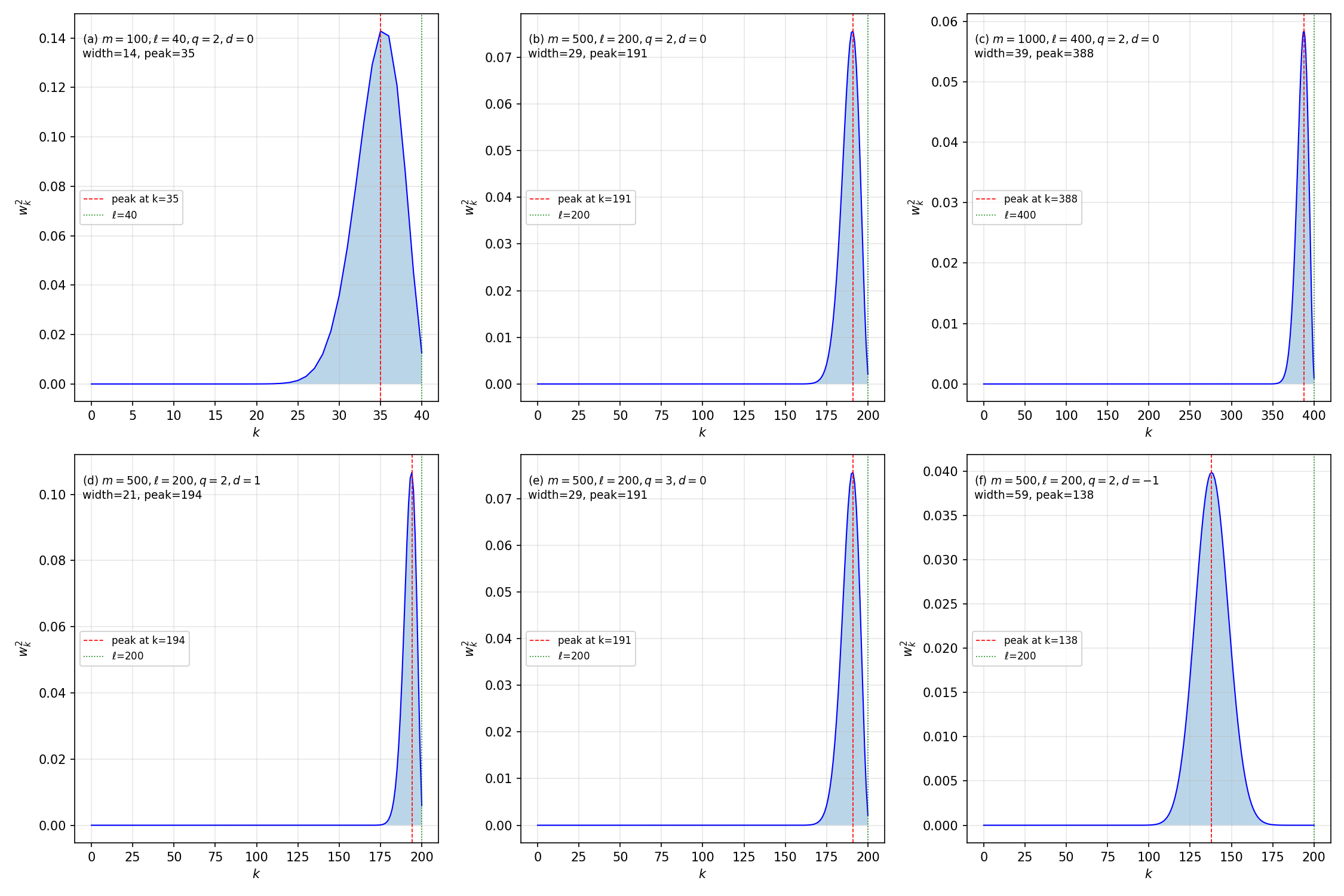}
\caption{Concentration of the leading eigenvector $w_k^2$ under various parameters ($2 \times 3$ panels).
In each panel, the blue curve with light blue shading shows the distribution of $w_k^2$ over Hamming weight $k$; the red vertical dashed line marks the peak position $k^*$ and the green vertical dashed line marks $k = \ell$.
Top row ((a)--(c), $(m,\ell) = (100,40), (500,200), (1000,400)$, $q=2$, $d=0$): as $m$ increases, the peak narrows (effective width $\propto \sqrt{m}$).
Bottom row ((d)--(f)): fixed $m=500$, $\ell=200$, varying $d=1$, $q=3$, and $d=-1$ respectively, showing that $d>0$ produces a narrower and more concentrated distribution while $d<0$ broadens it.
Strict positivity $w_k > 0$ holds in all cases.}
\label{fig:eigenvector_supp}
\end{figure}

\begin{figure}[htbp]
\centering
\includegraphics[width=0.8\textwidth]{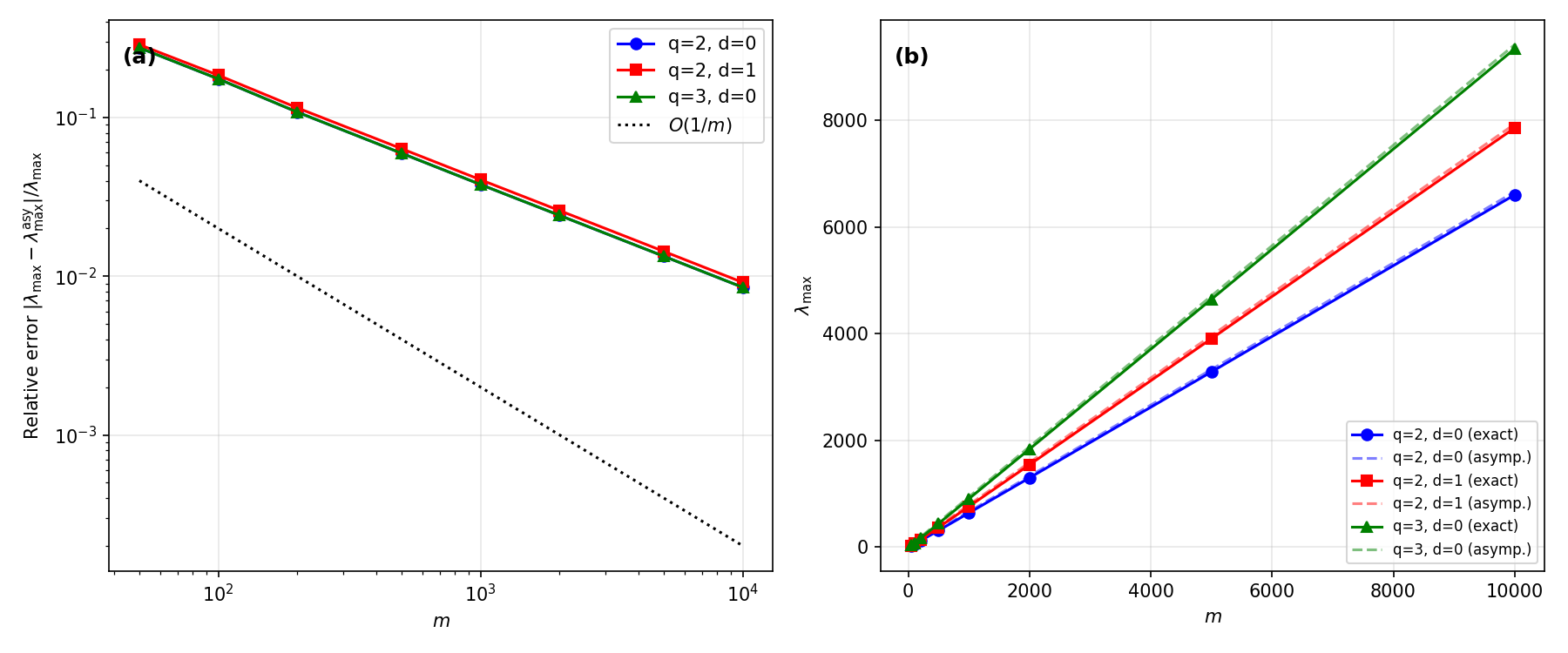}
\caption{Convergence accuracy of the asymptotic formula for $\lambda_{\max}$ ($\mu = 0.127$), two panels.
(a)~Log-log plot of the relative error $|\lambda_{\max} - \lambda_{\max}^{\text{asy}}|/\lambda_{\max}$ versus $m$: blue circles for $(q=2,d=0)$, red squares for $(q=2,d=1)$, green triangles for $(q=3,d=0)$, black dotted line for the $O(1/m)$ reference slope; all three curves are parallel to the reference line, confirming $O(1/m)$ convergence.
(b)~Exact values (solid lines with markers) and asymptotic values (dashed lines) of $\lambda_{\max}$ versus $m$; the two are nearly indistinguishable for $m \ge 500$.}
\label{fig:lambda_conv_supp}
\end{figure}

\subsection{Experiment 2: Weighted failure rate}

At $m = 50000$, the range $\mu \in [0.01, 0.30]$ is swept over 150 points (Figure~\ref{fig:eps_bar_supp}).
The ratio $\ebar/\varepsilon$ drops as low as $0.22$ in the degraded regime.

\begin{figure}[htbp]
\centering
\includegraphics[width=0.9\textwidth]{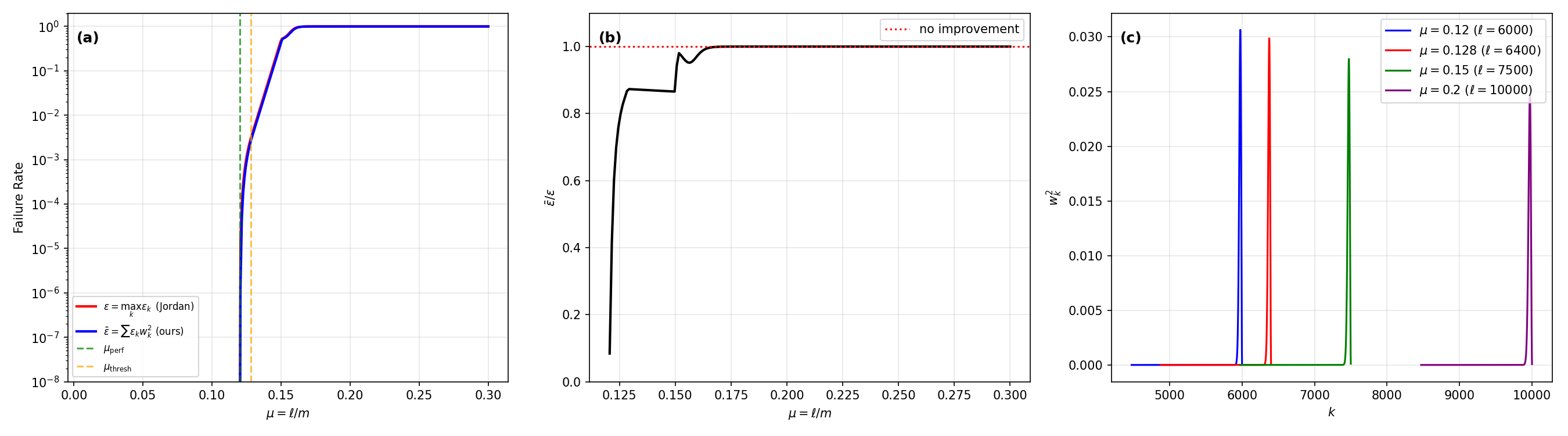}
\caption{Weighted failure rate analysis ($m = 50000$, $q = 2$, $d = 0$), three panels.
(a)~Failure rates versus $\mu$ (logarithmic vertical axis): red solid line is $\varepsilon = \max_k \varepsilon_k$, blue solid line is $\ebar = \sum_k \varepsilon_k w_k^2$; green and orange vertical dashed lines mark the perfect decoding threshold $\mu_{\text{perf}}$ and the transition threshold $\mu_{\text{thresh}}$, respectively; the two curves nearly coincide below $\mu_{\text{thresh}}$ and begin to separate above it.
(b)~Ratio $\ebar/\varepsilon$ versus $\mu$: red dotted line is the no-improvement reference ($\ebar/\varepsilon = 1$); in the transition band ($\mu \approx 0.12$--$0.15$), the ratio stabilizes at approximately $0.87$ (effective error rate compressed by about $13\%$); outside this band, the ratio approaches $1$ at both ends.
Larger improvements ($\ebar/\varepsilon < 0.1$) occur in the partial-win instance (Figure~\ref{fig:teaser}), where the separation between the $w_k^2$ peak and the high-error region is more pronounced.
(c)~Overlay of $w_k^2$ distributions at four values of $\mu$ ($0.12, 0.128, 0.15, 0.20$), showing the peak shifting rightward and broadening as $\mu$ increases.}
\label{fig:eps_bar_supp}
\end{figure}

\subsection{Experiment 3: Four-line comparison}

Figure~\ref{fig:four_lines_supp} displays the semicircle law, the Master Theorem bound, Jordan's bound, and the exact Rayleigh quotient value.
The hierarchy Semicircle $\ge$ Ours $\ge$ Jordan holds strictly throughout.
The exact value is consistently close to the Master Theorem bound (gap $< 0.02$).

\begin{figure}[htbp]
\centering
\includegraphics[width=\textwidth]{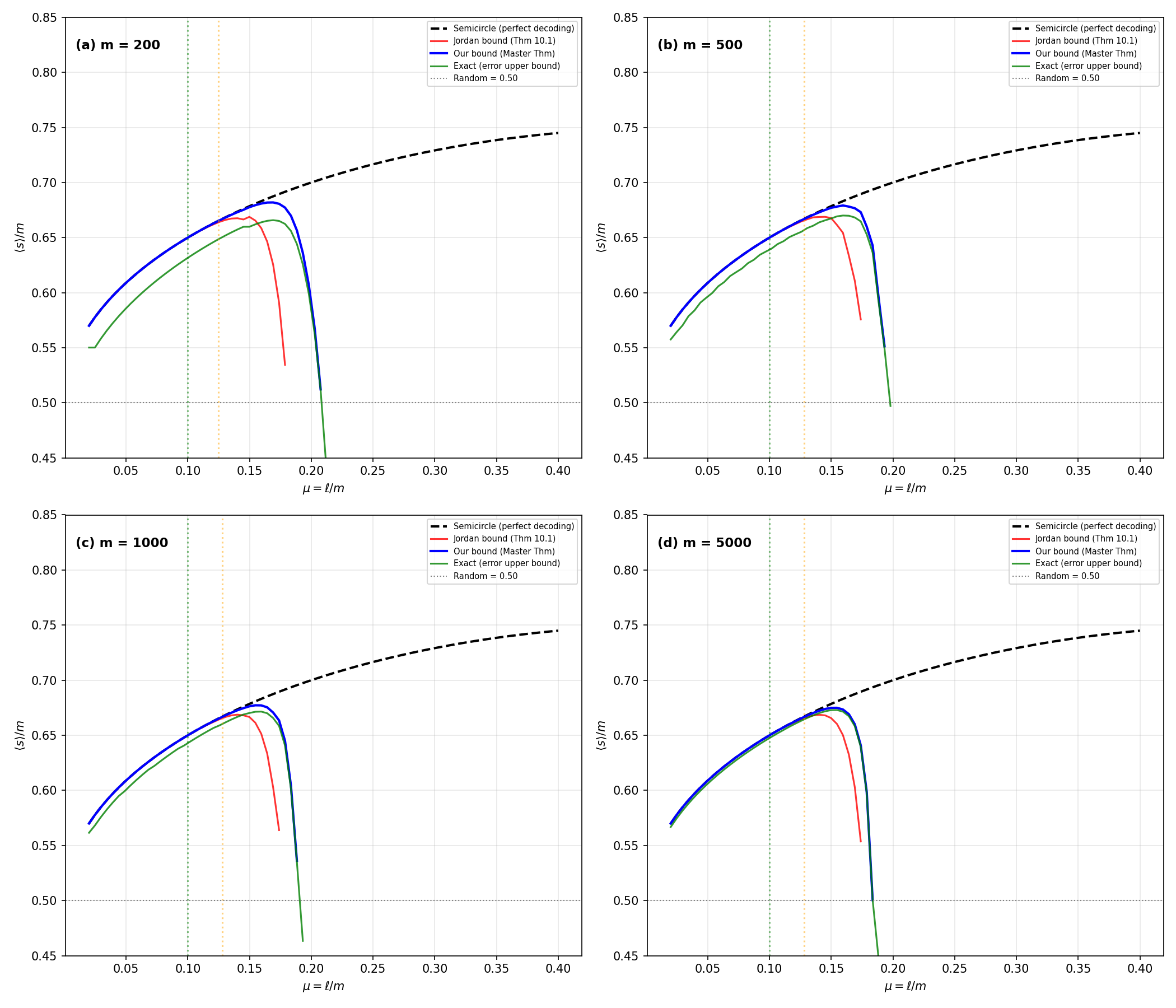}
\caption{Comparison of four performance curves ($2 \times 2$ panels, (a)--(d) for $m = 200, 500, 1000, 5000$).
In each panel, the horizontal axis is $\mu = \ell/m$ and the vertical axis is the approximation ratio $\langle s\rangle/m$.
Black dashed line: semicircle law (perfect decoding upper bound); blue solid line: Master Theorem bound; red solid line: Jordan's bound (Theorem~7.1); green solid line: exact Rayleigh quotient value (exact upper bound on the error term); gray dotted line: random assignment baseline $0.5$; green and orange vertical dashed lines mark threshold positions.
The hierarchy Semicircle $\ge$ Ours $\ge$ Jordan is strict at all $\mu$; the exact value (green) consistently tracks the Master Theorem bound (blue) closely (gap $< 0.02$), indicating that the bound is near-optimal within the Rayleigh quotient framework.}
\label{fig:four_lines_supp}
\end{figure}

\subsection{Experiment 4: Phase diagram}

The three-region partition of the $(\mu, \varepsilon)$ plane is shown in Figure~\ref{fig:phase_diagram_supp}.
The optimal operating point $\mu^*$ stabilizes at approximately $0.165$ (Figure~\ref{fig:optimal_mu_supp}).

\begin{figure}[htbp]
\centering
\begin{minipage}{0.48\textwidth}
\centering
\includegraphics[width=\textwidth]{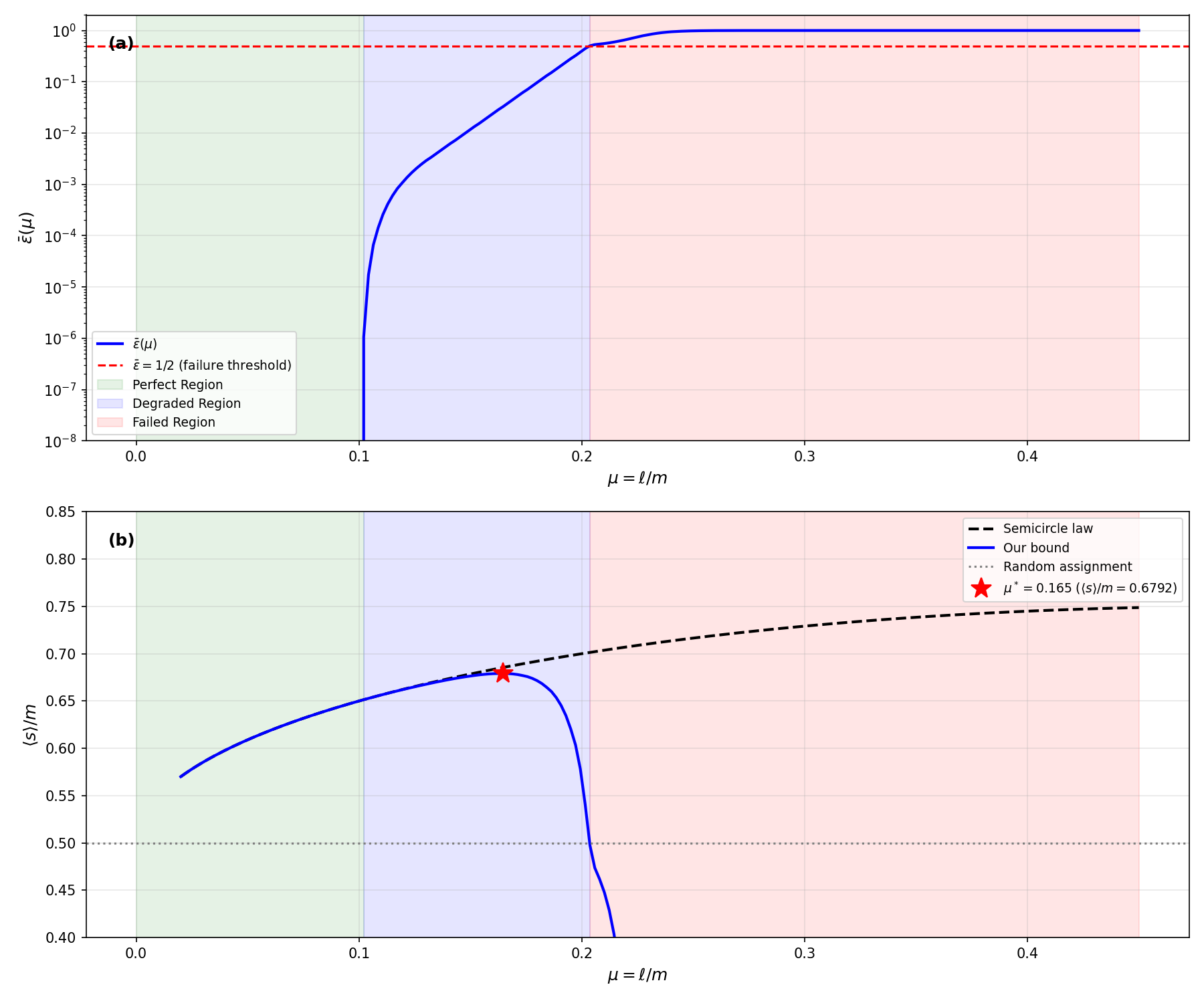}
\captionof{figure}{Three-regime classification ($m = 5000$), two panels arranged vertically.
(a)~Logarithmic plot of the weighted failure rate $\ebar(\mu)$ (blue solid line); red dashed line marks the failure threshold $\ebar = 1/2$; green, blue-violet, and pink backgrounds indicate the perfect regime ($\varepsilon_k = 0$), the degraded regime ($0 < \ebar < 1/2$, quantum advantage still present), and the failure regime ($\ebar \ge 1/2$), respectively.
(b)~Approximation ratio $\langle s\rangle/m$ versus $\mu$: black dashed line is the semicircle law, blue solid line is our bound, gray dotted line is random assignment $0.5$; red star marks the optimal operating point $\mu^* \approx 0.165$ ($\langle s\rangle/m \approx 0.679$).}
\label{fig:regime_supp}
\end{minipage}\hfill
\begin{minipage}{0.48\textwidth}
\centering
\includegraphics[width=\textwidth]{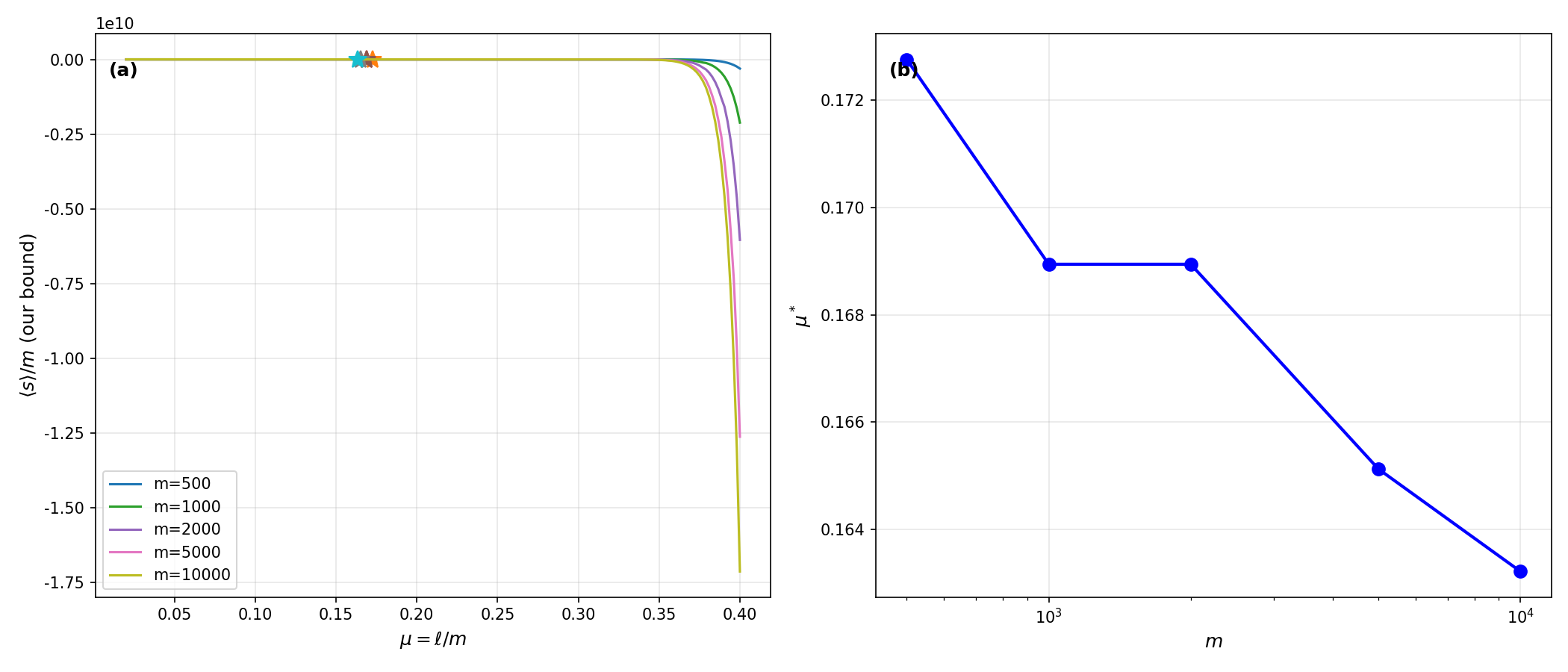}
\captionof{figure}{Optimal operating point as a function of $m$, two panels.
(a)~Approximation ratio $\langle s\rangle/m$ versus $\mu$ at different $m$ ($500$--$10000$, each in a distinct color); stars mark the optimal operating point $\mu^*$ for each.
(b)~$\mu^*$ versus $m$ (log-scale horizontal axis); $\mu^*$ decreases slowly with increasing $m$ and stabilizes at $\approx 0.164$.}
\label{fig:optimal_mu_supp}
\end{minipage}
\end{figure}

\subsection{Experiment 5: Large-scale instance}

At $m = 50000$, validation is performed using 139 measured data points (Figure~\ref{fig:jordan_instance_supp}).
Comparison across $q \in \{2,3,5,7\}$ shows that the blind spot appears earlier for larger $q$ (Figure~\ref{fig:q_comparison_supp}).

\begin{figure}[htbp]
\centering
\includegraphics[width=\textwidth]{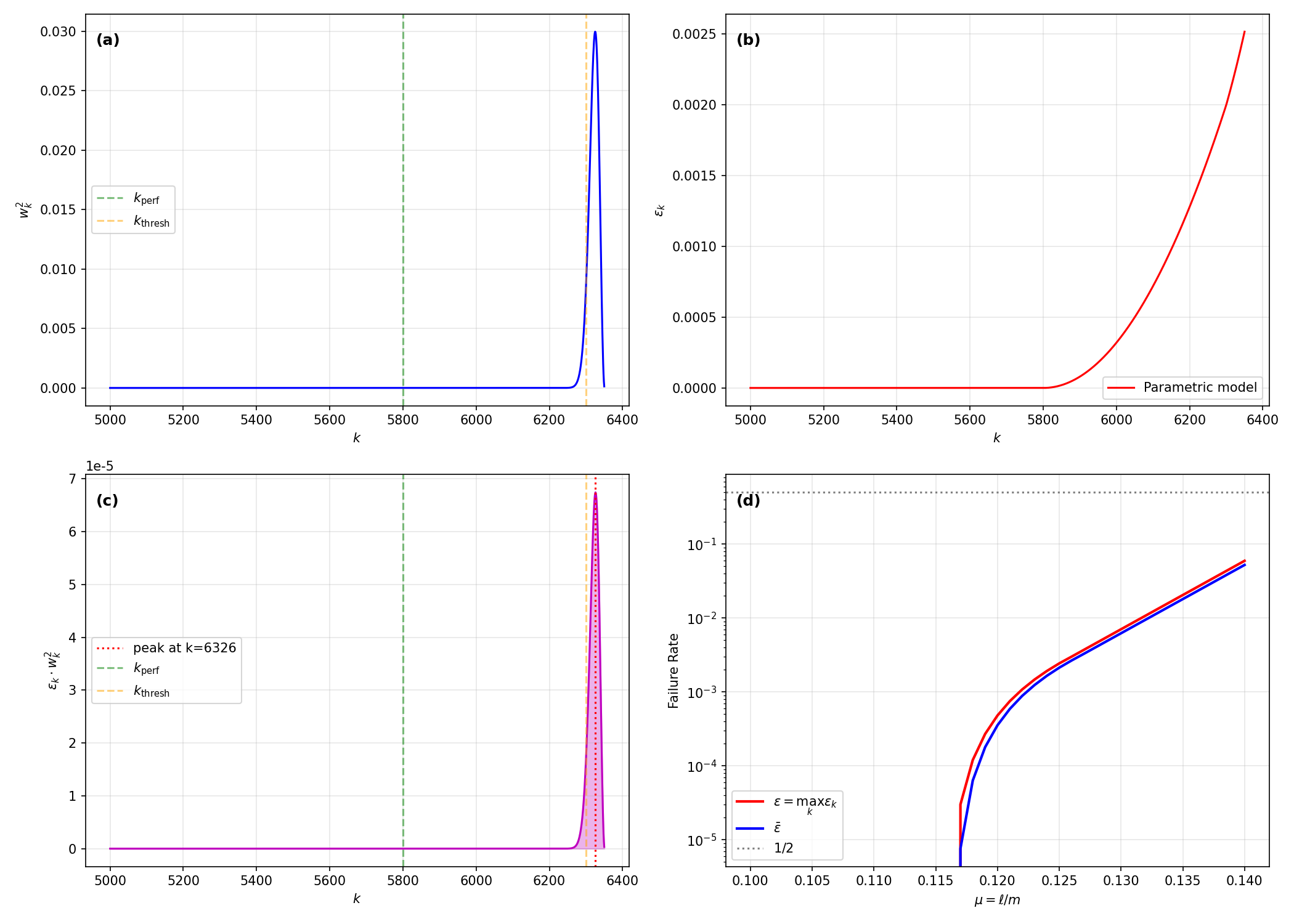}
\caption{Four-panel analysis of Jordan's LDPC main instance ($m = 50000$, $n = 31216$).
(a)~Distribution of $w_k^2$ (blue curve) over $k$ at $\ell = 6350$; green vertical dashed line is $k_{\text{perf}}$, orange vertical dashed line is $k_{\text{thresh}}$; the peak lies to the right of $k_{\text{thresh}}$.
(b)~Failure rates $\varepsilon_k$ from the parametric BP model (red curve); $\varepsilon_k = 0$ for $k < k_{\text{perf}}$ and increases monotonically for $k > k_{\text{perf}}$.
(c)~Weighted contributions $\varepsilon_k w_k^2$ (magenta bars), whose integral gives $\ebar$; the red dashed line marks the peak contribution at $k = 6326$, indicating that the dominant contribution to $\ebar$ is concentrated in a narrow band near the threshold.
(d)~Logarithmic comparison of $\varepsilon$ (red solid line) and $\ebar$ (blue solid line) versus $\mu$; gray dotted line is the $1/2$ reference.}
\label{fig:jordan_instance_supp}
\end{figure}

\begin{figure}[htbp]
\centering
\includegraphics[width=0.7\textwidth]{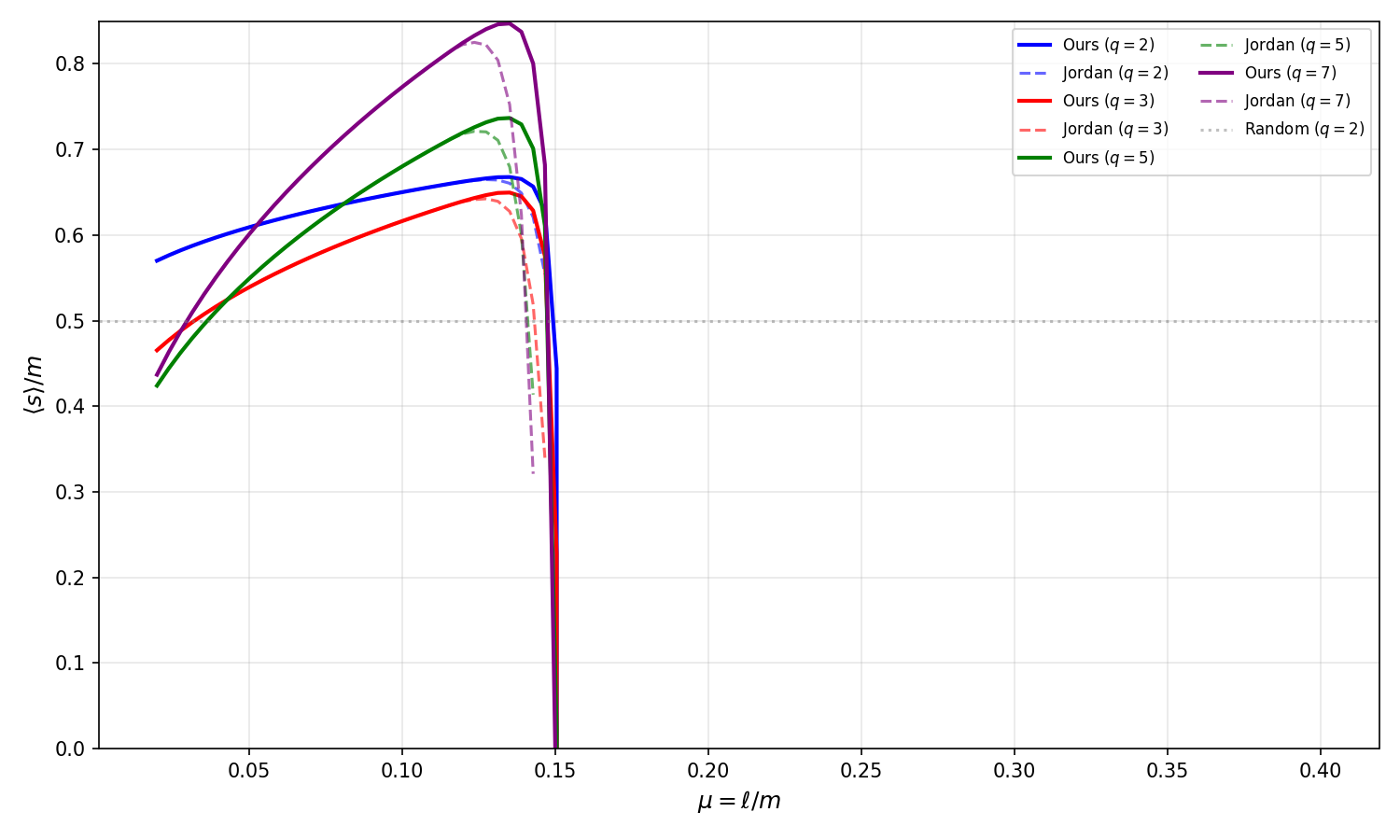}
\caption{Comparison of performance bounds across different finite field orders $q$ ($m = 50000$).
Horizontal axis: $\mu = \ell/m$; vertical axis: approximation ratio $\langle s\rangle/m$.
Solid lines show the Master Theorem bound (blue: $q=2$, red: $q=3$, green: $q=5$, purple: $q=7$); dashed lines in matching colors show Jordan's bound (truncated where the bound becomes vacuous).
Gray dotted line: random assignment baseline $1/q$.
As $q$ increases, the penalty factor $(q-1)$ grows, causing Jordan's bound to fail earlier, while the Master Theorem bound also contracts but degrades more gradually.}
\label{fig:q_comparison_supp}
\end{figure}

\subsection{Experiment 6: Diagonal offset}

Positive $d$ expands the effective region: at $d = +2$, the range extends from $\mu \in [0, 0.20]$ to $[0, 0.33]$ (+65\%), and $\langle s\rangle/m$ at $\mu = 0.127$ increases from 0.664 to 0.727 (Figure~\ref{fig:d_effect_supp}).

\begin{figure}[htbp]
\centering
\includegraphics[width=\textwidth]{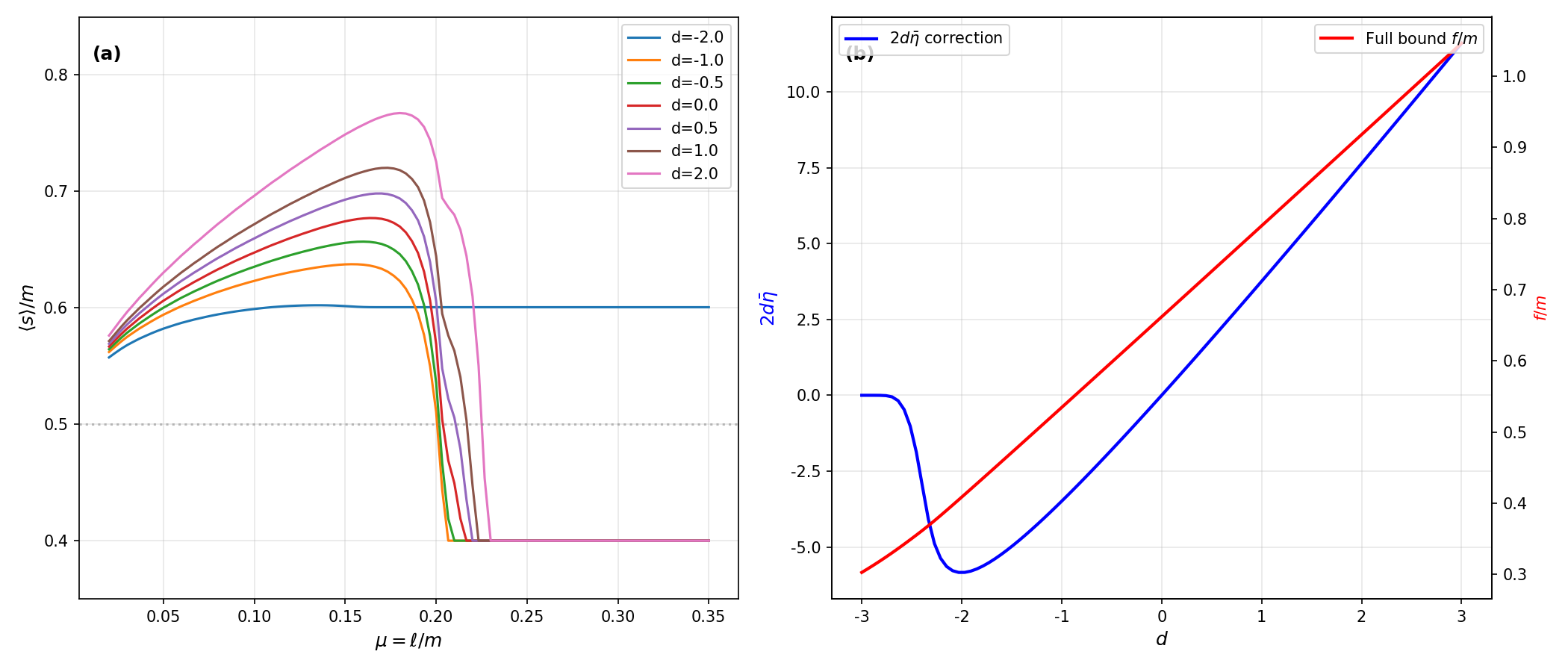}
\caption{Effect of the diagonal offset $d$ on the performance bound ($m = 5000$, $q = 2$), two panels.
(a)~Approximation ratio $\langle s\rangle/m$ versus $\mu$ at different $d$ values ($d = -2$ to $+2$, each in a distinct color); $d > 0$ keeps the curve positive over a wider range of $\mu$ (quantum advantage region expanded by 65\%), while $d < 0$ accelerates the decay (curves truncated at the display lower bound for high $\mu$).
(b)~At fixed $\mu = 0.130$: blue curve (left axis) shows the correction term $2d\etabar$ versus $d$; red curve (right axis) shows the full bound $f/m$ versus $d$; $d > 0$ yields a positive correction, lifting the bound.}
\label{fig:d_effect_supp}
\end{figure}

\subsection{Experiment 7: Finite-$m$ verification}

The finite-$m$ exact formula and the asymptotic formula are in close agreement at $m = 5000$ (Figure~\ref{fig:finite_asym_supp}), with the blind spot identical under both computations.

\begin{figure}[htbp]
\centering
\includegraphics[width=0.8\textwidth]{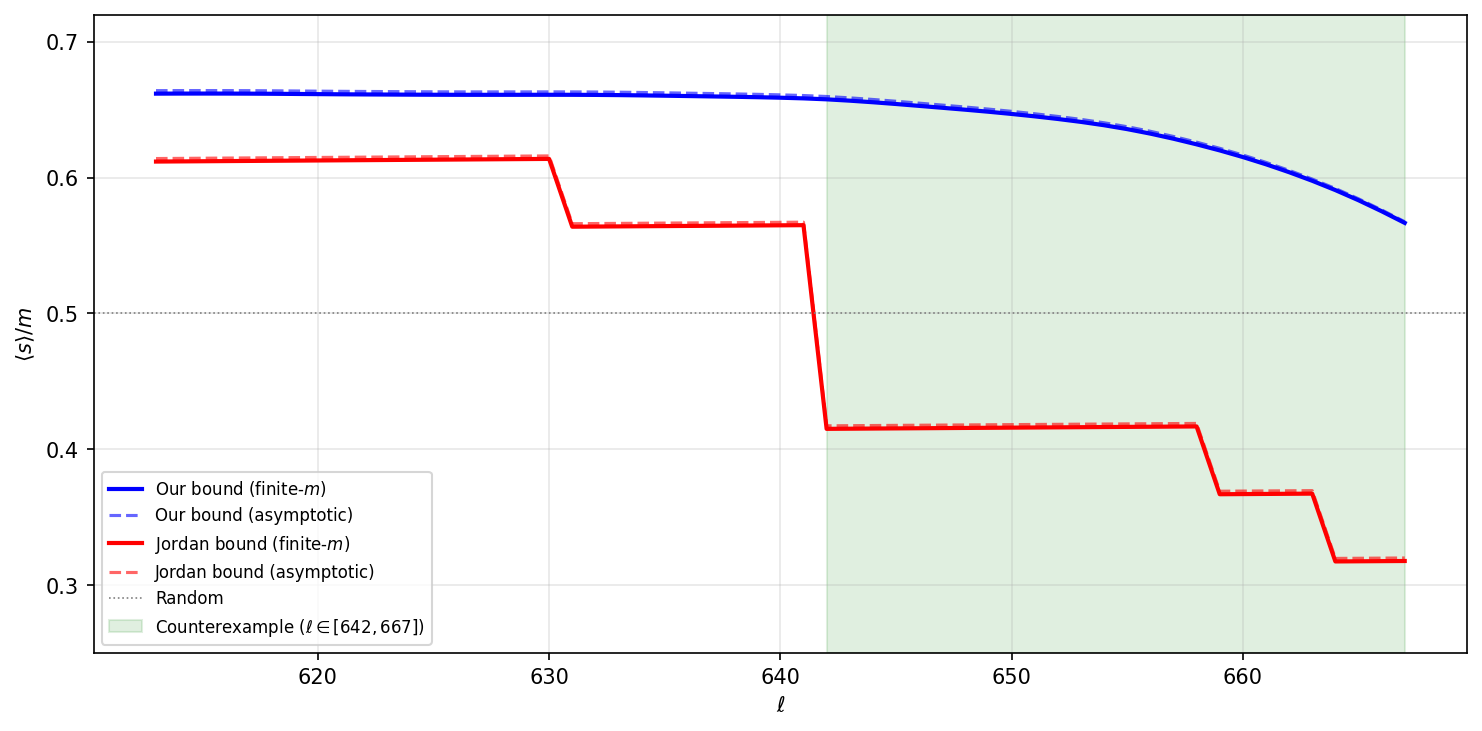}
\caption{Comparison of the finite-$m$ exact computation and the asymptotic formula ($m = 5000$, partial-win instance).
Horizontal axis: $\ell$; vertical axis: approximation ratio $\langle s\rangle/m$.
Blue solid line: our bound (finite-$m$ exact); blue dashed line: asymptotic formula; red solid line: Jordan's bound (finite $m$); red dashed line: its asymptotic version; gray dotted line: random assignment baseline $0.5$.
Green shaded region marks the blind spot ($\ell \in [642, 667]$).
Solid and dashed lines are nearly superimposed, indicating that the asymptotic approximation is already highly accurate at $m = 5000$; the blind spot is identical under both computations.}
\label{fig:finite_asym_supp}
\end{figure}

\section{Decoding failure rates: data sources}\label{app:model}

The per-layer BP failure rates $\varepsilon_k$ used throughout this paper come from two sources, depending on whether an experiment targets a specific published instance or a range of synthetic code parameters.

\paragraph{Direct BP shot data (\S\ref{sec:evidence} and Table~\ref{tab:blindspot}).}
For the blind-spot experiment in \S\ref{sec:evidence}, which uses the partial-win instance of Jordan et al.\ ($m = 5000$, rate-$1/2$ irregular LDPC, rank $4923$), we take $\varepsilon_k$ directly from the BP shot data released in their xortools repository~\cite{jordan2025}. From the file
\begin{center}\small\ttfamily noah\_code\_1e4\_rate\_one\_half\_\allowbreak 3\_1000\_6\_100.load\_decode\end{center}
we parse each line ``$k\colon r_k\ (n_k)$'' (Hamming weight $k$, empirical BP success rate $r_k$, shot count $n_k$) and set
\begin{equation}
\varepsilon_k := 1 - r_k \quad \text{for } k \in [0, 667], \qquad \varepsilon_k := 0 \text{ otherwise.}
\end{equation}
The worst-layer rate and weighted rate used in the bounds are then
\begin{equation}
\varepsilon := \max_{0 \le k \le \ell} \varepsilon_k, \qquad \ebar := \sum_{k=0}^{\ell} \varepsilon_k\, w_k^2,
\end{equation}
following Lemma~7.7 of~\cite{jordan2025}. No fitting, smoothing, or tunable parameter enters this pipeline. The number of shots per layer ranges from $30$ to $396$, so individual $\varepsilon_k$ carry statistical uncertainty of order $\sqrt{\varepsilon_k(1-\varepsilon_k)/n_k}\approx 0.03$--$0.09$; the qualitative conclusions reported in \S\ref{sec:evidence} ($26$ blind-spot points in $\ell \in [642, 667]$; ratio $\ebar/\varepsilon$ remaining small; Jordan's bound turning vacuous at $\ell = 642$) are robust to this noise, as they depend on the cumulative maximum rather than any single $\varepsilon_k$.

\paragraph{Three-stage parametric BP model (other figures).}
For experiments that scan $\ell$ over a wide range of synthetic $(m, \ell)$ values, or smooth across the main instance ($m = 50000$, $n = 31216$, whose BP data only covers $\ell \in [6400, 6538]$), we use a three-stage parametric model that captures the qualitative threshold behaviour of BP decoding~\cite{gallager1962,richardson2001}:
\begin{equation}\label{eq:bp_model}
\varepsilon_k = \begin{cases}
0 & k < k_{\text{perf}}, \\
\varepsilon_{\text{low}} \left(\frac{k - k_{\text{perf}}}{k_{\text{thresh}} - k_{\text{perf}}}\right)^2 & k_{\text{perf}} \le k < k_{\text{thresh}}, \\
\varepsilon_{\text{low}} \left(\frac{\varepsilon_{\text{high}}}{\varepsilon_{\text{low}}}\right)^{(k - k_{\text{thresh}})/(k_{\text{fail}} - k_{\text{thresh}})} & k_{\text{thresh}} \le k < k_{\text{fail}}, \\
\sigma(k) \to 1 & k \ge k_{\text{fail}},
\end{cases}
\end{equation}
with default parameters (for $m = 50000$): $k_{\text{perf}} = 6000$, $k_{\text{thresh}} = 6400$, $k_{\text{fail}} = 7500$, $\varepsilon_{\text{low}} = 0.003$, $\varepsilon_{\text{high}} = 0.5$. For an analysis of how finite-length BP performance deviates from the asymptotic threshold, see~\cite{di2002,polyanskiy2010}. These parameters are consistent with the threshold parameters reported by Jordan et al.

\section{Complete data}\label{app:data}

\begin{table}[htbp]
\centering
\caption{Complete blind-spot scan for the partial-win instance ($m = 5000$, $q = 2$, $d = 0$).}
\label{tab:full_blindspot}
\begin{tabular}{ccccccc}
\toprule
$\ell$ & $\mu$ & $\varepsilon$ & $\ebar$ & $\ebar/\varepsilon$ & Jordan & Ours \\
\midrule
610 & 0.122 & 0.000 & 0.000 & --- & 0.661 & 0.661 \\
615 & 0.123 & 0.100 & 0.001 & 0.012 & 0.612 & 0.662 \\
620 & 0.124 & 0.100 & 0.007 & 0.072 & 0.613 & 0.661 \\
625 & 0.125 & 0.100 & 0.013 & 0.134 & 0.613 & 0.661 \\
630 & 0.126 & 0.100 & 0.017 & 0.169 & 0.614 & 0.661 \\
635 & 0.127 & 0.200 & 0.024 & 0.121 & 0.564 & 0.660 \\
640 & 0.128 & 0.200 & 0.036 & 0.178 & 0.565 & 0.659 \\
642 & 0.128 & 0.500 & 0.044 & 0.087 & 0.415 & 0.658 \\
645 & 0.129 & 0.500 & 0.064 & 0.128 & 0.415 & 0.654 \\
650 & 0.130 & 0.500 & 0.103 & 0.207 & 0.416 & 0.647 \\
655 & 0.131 & 0.500 & 0.157 & 0.313 & 0.416 & 0.636 \\
660 & 0.132 & 0.600 & 0.237 & 0.396 & 0.367 & 0.615 \\
665 & 0.133 & 0.700 & 0.335 & 0.478 & 0.318 & 0.583 \\
\bottomrule
\end{tabular}
\end{table}

\begin{table}[htbp]
\centering
\caption{Summary of parameters for all seven experiments.}
\label{tab:all_params}
\begin{tabular}{clcccc}
\toprule
No. & Content & $m$ & $q$ & $d$ & Output \\
\midrule
1 & Eigenvector verification & 100--5000 & 2,3,5 & $-1$--$1$ & Figs.~\ref{fig:eigenvector_supp},~\ref{fig:lambda_conv_supp} \\
2 & Weighted failure rate & 50000 & 2 & 0 & Fig.~\ref{fig:eps_bar_supp} \\
3 & Four-line comparison & 200--5000 & 2 & 0 & Fig.~\ref{fig:four_lines_supp} \\
4 & Phase diagram & 5000 & 2 & 0 & Figs.~\ref{fig:phase_diagram_supp}--\ref{fig:optimal_mu_supp} \\
5 & Large-scale instance & 50000 & 2--7 & 0 & Figs.~\ref{fig:jordan_instance_supp},~\ref{fig:q_comparison_supp} \\
6 & Diagonal offset & 5000 & 2 & $-2$--$+2$ & Fig.~\ref{fig:d_effect_supp} \\
7 & Finite-$m$ verification & 5000/50000 & 2 & 0 & Fig.~\ref{fig:finite_asym_supp} \\
\bottomrule
\end{tabular}
\end{table}

\end{document}